\newcommand{\mc}{\mathcal}
\newcommand{\mb}{\mathbb}
\newcommand{\D}{\mc D}
\newcommand{\Hil}{\mc H}
\newcommand{\A}{\mc A}
\newcommand{\LL}{\mc L}
\newcommand{\beano}{\begin{eqnarray*}}
\newcommand{\enano}{\end{eqnarray*}}
\newcommand{\LD}{{\LL}^\dagger (\D)}
\newcommand{\dg}{\dagger}
\newcommand{\restr}{\upharpoonright}
\newtheorem{defn}{Definition}[section]
\newtheorem{prop}[defn]{Proposition}
\newtheorem{lemma}[defn]{Lemma}
\newtheorem{cor}[defn]{Corollary}
\newenvironment{proof}{\noindent {\bf Proof --}}{\hfill$\square$ \vspace{3mm}\endtrivlist}
\begin{document}

\thispagestyle{empty}

\vspace*{0.7cm}

\begin{center}
{\Large \bf Algebraic dynamics in O*-algebras:\\ a perturbative
approach}

\vspace{1.5cm}

{\large F. Bagarello }
\vspace{3mm}\\
 Dipartimento di Matematica e Applicazioni \\
Facolt\`a d' Ingegneria - Universit\`a di Palermo \\ Viale delle
Scienze,       \baselineskip15pt
     I-90128 - Palermo - Italy \vspace{2mm}\\

\vspace{3mm} and

\vspace{3mm} {\large C.Trapani}
\vspace{3mm}\\
 Dipartimento di Matematica e Applicazioni \\
Universit\`a di Palermo\\Via Archirafi 34, \baselineskip15pt
I-90123 - Palermo - Italy\vspace{2mm}

\end{center}

\vspace*{1cm}
\begin{abstract}
\noindent In this paper the problem of recovering an algebraic
dynamics in a perturbative approach is discussed. The mathematical
environment in which the physical problem is considered is that of
algebras of unbounded operators endowed with the quasi-uniform
topology. After some remarks on the domain of the perturbation,
conditions are given for the dynamics to exist as the limit of a
net of regularized linear maps.

\end{abstract}

\vfill

\section{Introduction}In the so-called algebraic approach to quantum systems, one of the basic problems to solve consists in the rigorous definition of the algebraic dynamics, i.e. the time evolution of observables and/or states. For instance, in quantum statistical mechanics or in quantum field theory one tries to recover the dynamics by performing a certain limit of the strictly {\em local} dynamics. However, this can be successfully done only for few models and under quite strong topological assumptions (see, for instance, \cite{sakai} and references therein). In many physical models the use of local observables corresponds, roughly speaking, to the introduction  of some {\em cut-off} (and to its successive removal) and this is in a sense a general and frequently used procedure, see \cite{thi,bagarello,lass1} for conservative and \cite{sew,sewbag} for dissipative systems.\\
Introducing a cut-off means that in the description of some
physical system, we know a {\em regularized} hamiltonian $H_L$,
where $L$ is a certain parameter closely depending on the nature
of the system under consideration.
We assume that $H_L$ is a bounded self-adjoint operator in the Hilbert space $\mc H$ of the physical system.\\
There are several possible situations of some interest. Among these we will consider the following ones:\\

\vspace{2mm}
\noindent a) {\em $H_L$ converges to an operator $H$}\\
This is apparently the simplest situation. Of course we should specify the sense in which the convergence is understood. But for the moment, we want only focus on the possible problems that arise. \\
For each fixed $L$, we know the solution of the dynamical problem,
i.e., we know the solution of the Heisenberg equation
\begin{equation}
 i \frac{d\alpha_L^t(A)}{dt}= [H_L,\alpha_L^t(A)].
\label{heis1}
\end{equation}
This solution, $\alpha_L^t(A)=e^{iH_Lt}A e^{-iH_Lt}$, would give
the {\em cut-offed} dynamics of the system. Then it make sense to
ask the question as to whether $\alpha_L^t(A)$ converges, possibly
in the same sense as $H_L$ converges to $H$, to the solution
$\alpha^t(A)$ of the Heisenberg equation
\begin{equation} i \frac{d\alpha^t(A)}{dt}= [H,\alpha^t(A)].
\label{heis2}
\end{equation}
It is worth stressing that even though $H$ is a well defined self-adjoint operator, it is in general, unbounded. For this reason, while the right hand side of Eqn. \eqref{heis1} is perfectly meaningful, the need of clarifying the right hand side of Eqn. \eqref{heis2} is always in order since $H$ is certainly not everywhere defined in $\Hil$.\\
Of course, the analysis of the convergence of $\alpha_L^t (A)$ to
$\alpha^t(A)$ (in this case where $\alpha^t(A)$ is explictly
known) is significant only for deciding the accuracy of the
approximation of $\alpha^t(A)$ with $\alpha_L^t (A)$.

\vspace{2mm}
\noindent b) {\em $H_L$ does not converge}\\
In this case, the situation becomes more difficult and a series of questions arise whose answer is highly non trivial.\\
As a first step, one could begin with considering the derivations
$$\delta_L (A) = i [H_L,A] $$
that give, at infinitesimal level, the dynamics of the system.\\
The first question, of course, is if these derivations converge, in a certain sense, to a derivation $\delta$ and which properties this derivation $\delta$ enjoys. For instance, is it a spatial derivation? (i.e, is there a symmetric operator $H$ that {\em implements}, at least in a generalized sense, the derivation? \cite{camillo}) \\
Further, can this derivation be integrated to some automorphisms group of the operator algebra we are dealing with? Or, conversely, since $\delta_L$ can be integrated without any problem, $\alpha_L^t(A)= {\rm e}^{iH_Lt} A {\rm e}^{-iH_Lt}$, what can be said about the limit of $\alpha_L^t$? And how are these two problems related?\\
These questions are well-known not to admit an easy general
solution.

\vspace{2mm} In this paper we will be mainly concerned with
situation a) above, while we will only make few comments on the
more difficult situation b) which will be considered in more
details in a future paper.

Our basic assumptions is that the hamiltonian $H$ of the system
can be expressed in the form
$$ H = H_0 + B;$$
in other words, our approach is tentatively {\em perturbative}:
indeed, we suppose that we have full knowledge of the {\em
unperturbed system} whose hamiltonian is $H_0$. In other words,
given $H$ we can extract what we call a free hamiltonian $H_0$,
which we know in all details, and
consider $B:=H-H_0$ as a perturbation of $H_0$ itself.\\
As we have already said, handling with unbounded operators poses a problem of domain for the algebra generated by the powers of the hamiltonian $H_0$. The natural choice is to take the set of $C^\infty$-vectors of $H_0$. Once a perturbation $B$ is introduced, it is natural to ask ourselves in which sense the corresponding domain for $H$ is related to that of $H_0$.\\
This is the main subject of Section 2,  where we start with the assumption that $\D^\infty(H)=\D^\infty(H_0)$ and derive some properties of the corresponding {\em quasi-uniform} topologies that the two operators define. \\
Then we give, in a quite general way, conditions on two
self-adjoint operators $H_0$ and $H$ for $\D^\infty(H)$ and
$\D^\infty(H_0)$ to coincide.

In Section 3, we come back to the problem of describing the
dynamics of the perturbed system as limit of a cut-offed dynamics.
In other words, we introduce a regularized hamiltonian $H_L=
Q^0_LHQ^0_L$ where the $Q^0_L$'s are certain spectral projection
of the unperturbed hamiltonian $H_0$ and we look for conditions
under which the unitary group generated by $H_L$  converges to
that generated by $H$. A class of examples fitting our hypotheses
is also given.

The main scope of the paper is to try and construct a mathematical
environment where this kind of problems can be successful treated
and also to develop techniques that could be adapted for the study
of the more relevant case b) outlined above. It is worth stressing
that this is a rather common situation in physics (think of
mean-field models or systems with ultra-violet cutoff \cite{thi,
bagarello}) and a perturbative approach should also be considered
for the derivations that describe the system at infinitesimal
level. A short discussion on this point is made in Section 4.

 \section{The mathematical framework}
We begin this Section with summarizing some known facts on
unbounded operator algebras and their topological properties. We
refer to \cite{lass1, lass2, ctrev, schm} for full details.

Let $\D$ be a dense domain in Hilbert space $\Hil$; with $\LD$ we
denote the set of all weakly continuous endomorphisms of $\D$.
Then to each operator $A \in \LD$ we can associate an operator
$A^\dg\in \LD$ with $A^\dg = A^*_{\restr \D}$
where $A^*$ is the usual Hilbert adjoint of $A$. Then $\LD$, under the usual operations and the involution $^\dg$, is a *-algebra of unbounded operators or, simply, an O*-algebra.\\
Let now $S$ be a selfadjoint operator in $\Hil$ and
 $$\D:=\D^\infty(S)= \bigcap_{n\geq 1}D(S^n).$$
Then $\D$ endowed with the topology $t_S$ of $\D^\infty(S)$
defined by the set of seminorms
$$ f \mapsto \|S^nf\|, \quad n=0,1,\ldots$$
or, equivalently
$$ f \mapsto \|(1+S^{2n})^{1/2}f\|, \quad n=0,1,\ldots $$
is a reflexive Fr\'echet space and the topology $t_S$ is
equivalent to the topology $t_{\LD}$ defined on $\D$ by the set of
seminorms
$$ f \mapsto \|Af\|, \quad A \in \LD.$$

In the *-algebra $\LD$ several topologies can be defined. For the
purposes of this paper we will only need the {\em quasi-uniform
topology} defined on $\LD$ in the following way. Put
$$ \|A\|^{{\mathcal N}, B}=\sup_{\phi \in {\mathcal N}} \|BA\phi\|, \quad B\in \LD,\; {\mathcal N} \mbox{ bounded in } \D[t_{\LD}].$$
Then, the quasi-uniform topology, $\tau_*^{\D}$ on $\LD$ is
defined by the set of seminorms:
$$ A\in \LD  \longrightarrow \max\{ \|A\|^{{\mathcal N}, B},  \|A^\dg\|^{{\mathcal N},
B}\}.$$
In the case where $\D = \D^\infty(S)$, the quasi-uniform topology on $\LD$ can be described in an easier way.\\
Indeed, let ${\mathcal F}$ denote the class of all positive, bounded and continuous functions $f(x)$ on $\mathbb{R}_{+}$, which are decreasing faster than any inverse power of $x$, i.e., $\sup_{x \in \mathbb{R}_{+}}x^kf(x)< \infty, \;\;k=0,1,\ldots$. \\
Then, if we put
$$ {\mathcal S}_f = \{ f(S)\phi;\, \phi \in \D, \|\phi\|=1\}$$ for $ f\in {\mathcal F}$, the family $\{{\mathcal S}_f \}_{f\in {\mathcal F}}$ is a basis for the bounded sets of $\D[t_S]$.\\
In practice this means that, for each $t_S$-bounded set ${\mathcal N}$ in $\D$, there exists an ${\mathcal S}_f$ such that ${\mathcal N} \subset {\mathcal S}_f$.\\
This fact easily implies that the quasi-uniform topology,
$\tau_*^{\D}$ on $\LD$ can be, equivalently,  defined by the set
of seminorms:
\begin{equation}
 \LD \in A \mapsto \|A\|_*^{f,k} = \max\{ \|S^kAf(S)\|,\|f(S)AS^k\|\} \quad f\in {\mathcal F},\, k\in \mathbb{N}\cup \{0\}
\label{QUN}
\end{equation}
where the norm on the right hand side of \eqref{QUN} is the usual
norm in ${\mathcal B}(\Hil)$. The *-algebra $\LD[\tau_*^{\D}]$ is,
in this case, a complete locally convex *-algebra, i.e. the
involution and the right- and left-multiplications are continuous.

\vspace{2mm} \noindent{\bf Remark -- } When estimating seminorms
of type \eqref{QUN} we will often consider only the term
$\|f(S)AS^k\|$; this is exactly what is needed when $A=A^\dg$. In
the general case, any $A \in \LD$  is a linear combination of
symmetric elements and so, as far as only estimates are concerned,
the arguments go usually through.

\vspace{2mm} We can now consider more concrete situations. To
begin with, we consider the simplest possible example in which a
physical system is described by a Hamiltonian $H_0$ that
mathematically is a self-adjoint operator; we assume $H_0\geq1$;
then $H_0$ has a spectral decomposition
$$ H_0 = \int_1^\infty \lambda d E(\lambda).$$
We put, for $L\geq 1$
 \begin{equation}
Q^0_L= \int_1^L d E(\lambda) \label{proj}
\end{equation}
and define the regularized hamiltonian by:
$$H_L = Q^0_L H_0 Q^0_L .$$
Then if $\D =\D^\infty(H_0)$ it turns out that the operators $Q^0_L$ and $H_L$ are bounded operators in ${\mathcal B}(\Hil)$ which belong to $\LD$ (the $Q^0_L$'s are indeed projectors) and they commute with each other and with $H_0$.\\
This makes quite easy to prove the following convergence
properties (in what follows the topology $\tau_*^{\D}$ is that
defined in Eqn. \eqref{QUN} with $S$ replaced by $H_0$):
\begin{itemize}
\item[(c1)] $H_L \to H_0$ with respect to the topology $\tau_*^{\D}$
\item[(c2)] ${\rm e}^{itH_L} \to {\rm e}^{itH_0}$ with respect to the topology $\tau_*^{\D}$
\item[(c3)] For each $A \in \LD$, ${\rm e}^{itH_L}A{\rm e}^{-itH_L}\stackrel{\tau_*^{\D}}{\rightarrow}{\rm e}^{itH_0}A{\rm e}^{-itH_0}$
\end{itemize}
All these statements can be derived from Lemma \ref{2.2} below.

The next step consists in considering a hamiltonian
\begin{equation}H= H_0 + B
\label{HAM}
\end{equation}
where $B$ is regarded as a {\em perturbation} of the operator
$H_0$. We suppose that the cut-off is determined by $H_0$, i.e.,
we assume that
\begin{equation} H_L = Q^0_L (H_0 + B) Q^0_L = H_0Q^0_L + Q^0_L B Q^0_L
\label{cutoff}
\end{equation}
where $Q^0_L$ is defined as in Eqn. \eqref{proj} by the spectral family $E(\cdot)$ of $H_0$. The r.h.s. is well defined since $Q^0_LAQ^0_L$ is bounded for any $A \in \LD$. \\
Clearly \eqref{cutoff} must be read as a formal expression unless
the domains of the involved operators are specified. To be more
definite, we make the following assumptions:
\begin{itemize}
\item[(a)] $\D =\D^\infty(H_0)$
\item[(b)] $D(H_0)\subseteq D(B)$ and $H=H_0+B$ is self-adjoint on $D(H_0)$
\item[(c)] $\D^\infty(H_0) = \D^\infty(H)$
\end{itemize}
Under these assumptions, we have:
\begin{lemma} {\ }\\
(1) The topologies $t_{H_0}$ and $t_H$ are equivalent on $\D$;\\
(2) the topologies on $\LD$ defined respectively by the set of
seminorms
$$\LD \in A \mapsto  \max\{ \|H_0^kAf(H_0)\|,\|f(H_0)AH_0^k\|\} \quad f\in {\mathcal F},\, k\in \mathbb{N}$$
and
$$\LD \in A \mapsto  \max\{ \|H^kAf(H)\|,\|f(H)AH^k\|\} \quad f\in {\mathcal F},\, k\in \mathbb{N}$$
are equivalent \label{topol}
\end{lemma}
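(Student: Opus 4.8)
The plan is to derive both statements directly from the structural facts about $\D^\infty(S)$ recalled in this section, exploiting that assumptions (a) and (c) together force $\D = \D^\infty(H_0) = \D^\infty(H)$ — literally \emph{the same set} — while assumption (b) guarantees that $H$ is a selfadjoint operator whose $C^\infty$-domain is precisely this common space. So there is really only one domain $\D$ in play, presented in two different ways.

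For part (1) I would argue that both $t_{H_0}$ and $t_H$ coincide, by the Fr\'echet-space fact stated above, with the topology $t_{\LD}$ that the seminorms $\phi \mapsto \|A\phi\|$, $A\in\LD$, induce on $\D$; and $t_{\LD}$ depends on the domain $\D$ alone, not on the selfadjoint operator used to produce it. Since $H_0$ and $H$ share their smooth domain, $t_{H_0} \equiv t_{\LD} \equiv t_H$. If a more self-contained argument is wanted, I would instead invoke the closed graph theorem: $\D[t_{H_0}]$ and $\D[t_H]$ are Fr\'echet spaces on one and the same underlying vector space, and the identity map between them has closed graph — convergence in either topology entails convergence in the Hilbert norm (the $k=0$ seminorm) and Hilbert-space limits are unique — hence the identity and its inverse are both continuous.

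For part (2) I would again appeal to the characterization of the quasi-uniform topology established just before \eqref{QUN}: when the domain is $\D^\infty(S)$ for a selfadjoint $S$, the seminorms in \eqref{QUN}, built from the spectral calculus of $S$ and the sets $\mc S_f$, form a defining system for the intrinsic topology $\tau_*^{\D}$ of $\LD$. Applying this with $S=H_0$ identifies the first family in the statement as a defining system for $\tau_*^{\D}$; applying it with $S=H$ — legitimate precisely because $\D=\D^\infty(H)$ as well — identifies the second family as a defining system for the \emph{same} $\tau_*^{\D}$. Hence the two families are equivalent. Alternatively, from part (1) the two Fr\'echet topologies, hence the two bornologies, coincide, so $\{\mc S_f^{H_0}\}_{f\in\mc F}$ and $\{\mc S_f^{H}\}_{f\in\mc F}$ are bases for one and the same collection of bounded sets; combined with the quantitative form of part (1), namely $\|H^k\phi\|\le C\sum_{j\le m}\|H_0^j\phi\|$ and symmetrically, this gives the mutual domination of the two systems of seminorms directly.

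The only real subtlety I anticipate is simply making sure the "automatic" route is legitimately available — i.e. that the equivalence $t_S\equiv t_{\LD}$ and the fact that the sets $\mc S_f$ form a basis of bounded sets, both only asserted above, may be used as black boxes; granting those, nothing remains to prove. Should one instead attempt a bare-hands comparison of $\|f(H)AH^k\|$ against $\|g(H_0)AH_0^m\|$, the genuine obstacle is that $(H_0+B)^n$ cannot be expanded freely: $B$ need not map $D(H_0)$ into itself and the commutator of $H_0$ with $B$ is uncontrolled. One would first have to extract from (a)--(c) that $B$ maps $\D$ into $\D$ (since for $\phi\in\D$ one has $B\phi=H\phi-H_0\phi$ with both terms in $\D$) and, via the closed graph theorem, that $B$ is $H_0$-bounded on $D(H_0)$, before any induction on $n$ could even begin — which is exactly the detour the clean route sidesteps.
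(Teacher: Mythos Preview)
Your proposal is correct and in substance matches the paper's proof. For (1) the paper argues that $H$ is $t_{H_0}$-continuous and $H_0$ is $t_H$-continuous ``according to the fact that the domain is reflexive''; your primary route---both topologies coincide with the intrinsic $t_{\LD}$, which depends only on $\D$---is the same observation packaged via the equivalence $t_S\equiv t_{\LD}$ already recorded in the preliminaries. For (2) the paper deduces the result from (1) by noting that the families of $t_{H_0}$-bounded and $t_H$-bounded subsets coincide, which is precisely your second alternative. Your closed-graph-theorem variant for (1) is a genuinely independent route that trades the reflexivity/black-box input for a standard Fr\'echet-space argument; it is a nice self-contained alternative but not what the paper does.
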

\begin{proof} The statement (1) follows by taking into account that $H$ is continuous with respect to $t_{H_0}$ and $H_0$ is continuous with respect to $t_H$, according to the fact that the domain is reflexive.\\
The statement (2) follows from (1), since the family of
$t_H$-bounded subsets of $\D$ and the family of $t_{H_0}$-bounded
subsets coincide.
\end{proof}
By the previous Lemma, the topology $\tau_*^{\D}$, can be
described, following the convenience, via the seminorms in $H$ or
by those in $H_0$. Now, we can  prove the following
\begin{lemma} For each $X \in \LD$, $X= \tau_*^{\D}-\lim_{L\to \infty} Q^0_L X Q^0_L$
\label{2.2}
\end{lemma}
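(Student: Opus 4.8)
The plan is to reduce the statement to two operator-norm estimates and then to kill the error terms using the faster-than-polynomial decay of the functions in $\mathcal{F}$. Write $P_L := I - Q^0_L = \int_L^\infty dE(\lambda)$, the spectral projection of $H_0$ onto $[L,\infty)$, so that
\[
X - Q^0_L X Q^0_L = P_L X + Q^0_L X P_L .
\]
Since here $\D = \D^\infty(H_0)$, the topology $\tau_*^{\D}$ is the one described in \eqref{QUN} with $S = H_0$, so it suffices to show that, for every $f \in \mathcal{F}$ and every $k \in \mathbb{N}\cup\{0\}$, both $\|H_0^k (X - Q^0_L X Q^0_L) f(H_0)\|$ and $\|f(H_0)(X - Q^0_L X Q^0_L) H_0^k\|$ tend to $0$ as $L \to \infty$. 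Taking Hilbert adjoints turns the second expression into $\|H_0^k (X^\dg - Q^0_L X^\dg Q^0_L) f(H_0)\|$ (using $(Q^0_L)^\dg = Q^0_L$), and since $X^\dg \in \LD$ as well, it is enough to treat the first one; alternatively one may invoke the Remark following \eqref{QUN}.

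For the estimates I would use three elementary facts. First, since $H_0 \ge 1$, one has $\|P_L H_0^{-j}\| \le L^{-j}$ and $\|H_0^j Q^0_L\| \le L^j$ for every $j$. Second, setting $M_j := \sup_{\lambda \ge 1}\lambda^j f(\lambda) < \infty$ (finite because $f \in \mathcal{F}$), one has $\|H_0^j Q^0_L f(H_0)\| \le M_j$ uniformly in $L$, while $\|H_0^j f(H_0) P_L\| = \sup_{\lambda \ge L}\lambda^j f(\lambda) \le M_{j+1}/L$. Third, because $X$ is continuous on $\D[t_{\LD}]=\D[t_{H_0}]$, for each $j$ there are constants $m = m(j)$ and $C = C(j)$ with $\|H_0^j X \phi\| \le C\,\|H_0^m \phi\|$ for all $\phi \in \D$ (and likewise with $X^\dg$ in place of $X$).

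Now split $\|H_0^k (P_L X + Q^0_L X P_L) f(H_0)\|$ into its two summands. For the first, $\|H_0^k P_L X f(H_0)\| \le \|P_L H_0^{-1}\|\,\|H_0^{k+1} X f(H_0)\| \le L^{-1}\,C\,\|H_0^{m} f(H_0)\| \le C M_m / L$, using the continuity estimate at order $k+1$ and $\|H_0^m f(H_0)\| = M_m$. For the second, $\|H_0^k Q^0_L X P_L f(H_0)\| \le \|H_0^k Q^0_L\|\,\|X P_L f(H_0)\| \le L^k\,C\,\|H_0^{m} P_L f(H_0)\|$, and since $\sup_{\lambda \ge L}\lambda^m f(\lambda) \le M_{m+k+1}\,L^{-(k+1)}$ this is at most $C M_{m+k+1}/L$. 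Hence $\|H_0^k (X - Q^0_L X Q^0_L) f(H_0)\| = O(1/L) \to 0$, and the companion seminorm is handled in exactly the same way after passing to the adjoint. This gives $X = \tau_*^{\D}\text{-}\lim_L Q^0_L X Q^0_L$.

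The step I expect to be the real obstacle is the mixed term $Q^0_L X P_L$: because $X$ need not commute with the spectral projections of $H_0$, one cannot bound the two projections separately, and the factor $Q^0_L$ in a seminorm of order $k$ costs up to $L^{k}$. The resolution — and the reason the class $\mathcal{F}$ is the natural one — is that the rapid decay of $f$ allows one to extract from $P_L$ (equivalently, from $f(H_0)$) as many negative powers of $L$ as needed to absorb that growth, with one power to spare. Everything else (finiteness of the seminorms involved, the adjoint manipulations, membership of $H_0^k X$ in $\LD$) is routine in the O*-algebra setting.
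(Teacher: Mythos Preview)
Your proof is correct and follows essentially the same route as the paper's: the same splitting $X - Q^0_L X Q^0_L = P_L X + Q^0_L X P_L$ (the paper uses the mirror $XP_L + P_L X Q^0_L$ and the companion seminorm $\|f(H_0)(\cdot)H_0^k\|$), and the same mechanism of extracting decay from $P_L$ against fixed quasi-uniform seminorms of $X$. The only cosmetic difference is in the mixed term, where the paper inserts $H_0^{\ell}H_0^{-\ell}$ and uses $\|H_0^{-\ell}P_L\|\le L^{-\ell}$ with an $L$-independent factor $\|H_0^{\ell}f(H_0)XH_0^k\|$, whereas you bound $\|H_0^k Q^0_L\|\le L^k$ crudely and recover the loss from the rapid decay of $f$ on $[L,\infty)$; both arguments give $O(1/L)$.
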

\begin{proof} First, notice that, for $\ell \in {\mb N}^+$, we have
$$ \|H_0^{-\ell}(I-Q^0_L)\phi\|^2 = \int_L^\infty \frac{1}{\lambda^{2\ell}}d(E(\lambda)\phi, \phi) \leq \frac{1}{L^{2\ell}}\|\phi\|^2, \quad \forall \phi \in \D $$
and so
$$ \|H_0^{-\ell}(I-Q^0_L)\| \to 0 \mbox{ as } L \to \infty .$$
Let now $f \in {\mc F}$ and $k \in {\mb N}$; then we have:
\begin{eqnarray*}
\lefteqn{\|f(H_0)(B-Q^0_L B Q^0_L)H_0^k\|}\\ &\leq &\|f(H_0)B
H_0^k(I-Q^0_L)\| + \|f(H_0)(1-Q^0_L) B H_0^kQ^0_L\|
\\
&=& \sup_{\|\phi\|=\|\psi\|=1}|<H_0^{-\ell}(1-Q^0_L)\phi, H_0^{k+\ell}B^+f(H_0)\psi>| \\&+& \sup_{\|\phi\|=\|\psi\|=1}|<f(H_0)H_0^\ell B H_0^k Q^0_L \phi, H_0^{-\ell}(1-Q^0_L)\psi>|\\
&\leq& \|H_0^{-\ell}(1-Q^0_L)\| \| H_0^{k+\ell}B^+f(H_0)\| +
\|H_0^\ell f(H_0)BH_0^k\| \|H_0^{-\ell}(1-Q^0_L)\| \to 0
\end{eqnarray*}
for $L \to \infty$.
\end{proof}
Incidentally, this lemma gives a proof of (c1) and (c2) above.
 The proof of (c3) requires the
use of a triangular inequality, of (c2) and of the commutation
rule $[H_0,H_L]=0$.

Taking into account the separate continuity of the multiplication
and the previous lemma, we have:
\begin{cor} $\delta_L(A):= i[A,H_L]$ converges to $\delta(A):= i[A,H]$ with respect to the topology $\tau_*^{\D}$.
\label{deriv}
\end{cor}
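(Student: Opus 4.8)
The plan is to obtain the corollary as an essentially immediate consequence of Lemma \ref{2.2} together with the continuity of the multiplication in the locally convex *-algebra $\LD[\tau_*^{\D}]$. First I would record that, by assumption (c), $\D = \D^\infty(H_0) = \D^\infty(H)$, so that $H$ (like $H_0$) is an endomorphism of $\D$ with $H^\dagger = H$; hence $H \in \LD$, and for every $A \in \LD$ the commutators $[A,H]$ and $[A,H_L]$ are genuine elements of $\LD$ (the latter is even bounded, since by \eqref{cutoff} we have exactly $H_L = Q^0_L H Q^0_L$). Applying Lemma \ref{2.2} with $X = H$ then gives
$$H = \tau_*^{\D}-\lim_{L\to\infty} Q^0_L H Q^0_L = \tau_*^{\D}-\lim_{L\to\infty} H_L .$$

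Next I would fix $A \in \LD$ and write
$$\delta_L(A) - \delta(A) = i\big([A,H_L] - [A,H]\big) = i\big(A(H_L - H) - (H_L - H)A\big).$$
Since $\LD[\tau_*^{\D}]$ is a locally convex *-algebra with continuous left and right multiplications, as recalled just after \eqref{QUN}, the maps $X \mapsto AX$ and $X \mapsto XA$ are $\tau_*^{\D}$-continuous; combining this with the convergence $H_L - H \to 0$ in $\tau_*^{\D}$ established above, one gets $A(H_L - H) \to 0$ and $(H_L - H)A \to 0$ in $\tau_*^{\D}$, and therefore $\delta_L(A) \to \delta(A)$ in $\tau_*^{\D}$, which is the assertion.

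There is no real obstacle here; the argument is a genuine corollary of Lemma \ref{2.2}, as its placement in the text suggests. The one step deserving an explicit word is the verification that $H \in \LD$, i.e. that $H$ is a (weakly continuous) endomorphism of $\D$ — this is precisely where hypothesis (c) is used in an essential way, and if one wishes to be fully explicit the weak continuity of $H$ on $\D[t_H]$ is automatic and then transfers to $\D[t_{H_0}]$ by part (1) of Lemma \ref{topol}. Note finally that the convergence produced is pointwise in $A$, nothing being uniform in $A$, which is exactly what the statement claims.
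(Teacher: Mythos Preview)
Your proof is correct and follows exactly the approach indicated in the paper: the corollary is deduced from Lemma \ref{2.2} (applied with $X=H$, giving $H_L\to H$ in $\tau_*^{\D}$) together with the separate continuity of multiplication in $\LD[\tau_*^{\D}]$. Your added remark that assumption (c) together with Lemma \ref{topol} is what guarantees $H\in\LD$ makes explicit a point the paper leaves implicit.
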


Going back to our assumptions on the domains, it is apparent that
conditions (b) and (c) given above are quite strong. It is natural
to ask the question under which conditions on $B$ they are indeed
satisfied.
\subsection{The domain}
Our starting point is an operator
$$ H=H_0 +B$$
under the assumption that the {\em perturbation} $B$ is a
symmetric operator and $D(B) \supseteq D(H_0)$. In general $H$ may
fail to be self-adjoint, unless $B$ is $H_0$-bounded in the sense
that there exist two real numbers $a,b$ such that
\begin{equation}
 \|B\phi\| \leq a \|H_0\phi\|+ b \|\phi\|, \quad \forall \phi \in D(H_0) .
\label{kato}
\end{equation}
If the {\em inf} of the numbers $a$ for which \eqref{kato} holds (the so called {\em relative bound}) is smaller than $1$, then the Kato-Rellich theorem \cite{rs2} states that $H$ is self-adjoint and essentially self-adjoint on any core of $H_0$.\\
This is clearly always true if $B$ is bounded: in this case the
relative bound is $0$.
In conclusion, the Kato-Rellich theorem gives a sufficient condition for (b) to be satisfied.\\
Let us now focus our attention on condition (c). We first discuss
some examples.

\noindent{\bf Example 1 --}To begin with, we stress the fact that
the conditions of the Kato-Rellich theorem are not sufficient to
imply that $\D^\infty(H)= \D^\infty(H_0)$. \\This can be seen
explicitly with a simple example.  Indeed, let us consider the
case where $B= P_f$ with $f \in \Hil \setminus D(H_0)$ and $P_f$
the projection onto the one-dimensional subspace generated by $f$.
It is quite simple to prove that, in this case:
$$ D((H_0+P_f)^2) \cap D(H_0^2)= D(H_0) \cap \{f\}^\perp. $$
This equality implies that neither $\D^\infty(H_0+P_f)$ is a
subset of $\D^\infty(H_0)$ nor the contrary. So, in this example,
$\D^\infty(H)$ and $\D^\infty(H_0)$ do not compare.

\noindent{\bf Example 2--} Let $p$ and $q$ be the operators in
$L^2({\mb R})$ defined by:
$$ (pf)(x)=i f'(x), \quad f \in W^{1,2}({\mb R})$$
$$(qf)(x) = xf(x) , \quad f \in {\mc F}W^{1,2}({\mb R})$$
where ${\mc F}$ denotes the Fourier transform. Let us consider
$$ H_0 = p^2+q^2 $$
then, as is known, $H_0$ is an essentially self-adjoint operator on ${\mc S}(\mb R)$ and this domain is exactly $\D^\infty(H_0)$.\\
Let us now take as $B$ the operator $-q^2$, then
$$ \D^\infty (H) = \{ f \in C^\infty ({\mb R)}): f^{(k)} \in L^2(\mb R), \, \forall k \in {\mb N}\}.$$
Thus, in this case $\D^\infty (H) \supset \D^\infty (H_0)$.\\

\vspace{1mm} In order to construct an example where the opposite
inclusion hold, we start by taking $H_0= p^2$ and $B=q^2$. In this
case,
$$ \D^\infty (H) =  {\mc S}(\mb R) \subset \{ f \in C^\infty ({\mb R)}): f^{(k)} \in L^2(\mb R), \, \forall k \in {\mb N}\}= \D^\infty (H_0).$$
These examples show that all situations are possible, when
comparing $\D^\infty (H)$ and $\D^\infty (H_0)$.

\vspace{2mm} For shortness, we will call $B$ a {\em
KR-perturbation} if it satisfies the assumption of the
Kato-Rellich theorem. Before going forth, we give the following
\begin{prop} Let $A$ and $B$ two selfadjoint operators in Hilbert space $\Hil$. Then
$$ \D^\infty (A) = \D^\infty (B) $$
if, and only if, the following two conditions hold:
\begin{itemize}
\item[(i)] for each $ k \in{\mb N}$ there exists $\ell \in {\mb N}$ such that $D(B^\ell) \subseteq D(A^k)$;
\item[(ii)] for each $ h \in{\mb N}$ there exists $m \in {\mb N}$ such that $D(A^m) \subseteq D(B^h)$.
\end{itemize}
\label{ppp}
\end{prop}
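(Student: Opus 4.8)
The plan is to treat the two implications separately; the ``if'' part is purely set-theoretic, and all the work sits in the ``only if'' part. For sufficiency, assume (i): for each $k$ pick the associated $\ell$ and observe $\D^\infty(B)\subseteq D(B^\ell)\subseteq D(A^k)$, so intersecting over $k$ gives $\D^\infty(B)\subseteq\D^\infty(A)$; assuming (ii) gives the reverse inclusion in the same way, whence equality. For necessity, note the statement is symmetric in $A$ and $B$, so it is enough to prove (i): for a fixed $k\in\mb N$, produce an $\ell$ with $D(B^\ell)\subseteq D(A^k)$.

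To do this I would first exploit the hypothesis at the level of topologies. Write $\D:=\D^\infty(A)=\D^\infty(B)$; it carries the two Fr\'echet topologies $t_A$ and $t_B$ recalled above, each of which dominates the norm of $\Hil$. The identity map $\D[t_A]\to\D[t_B]$ has closed graph, because a sequence converging in $t_A$ to $\phi$ and in $t_B$ to $\psi$ converges in $\|\cdot\|$ to both, forcing $\phi=\psi$; by the closed graph theorem for Fr\'echet spaces this map, and likewise its inverse, is continuous, so $t_A=t_B$ on $\D$. In particular the $t_A$-continuous seminorm $\phi\mapsto\|A^k\phi\|$ is $t_B$-continuous, and since $t_B$ is generated by the seminorms $\phi\mapsto\|(1+B^{2n})^{1/2}\phi\|$ this produces $n\in\mb N$ and $C>0$ with
$$\|A^k\phi\|\le C\big(\|\phi\|+\|B^n\phi\|\big),\qquad\forall\,\phi\in\D.$$

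The last and, I expect, most delicate step is to promote this estimate --- valid a priori only on $\D$ --- to the genuine inclusion $D(B^n)\subseteq D(A^k)$, i.e.\ to take $\ell=n$. I would use the spectral measure $F(\cdot)$ of $B$: given $\psi\in D(B^n)$, the truncations $\psi_N:=F([-N,N])\psi$ lie in $\D^\infty(B)=\D$, tend to $\psi$ in $\Hil$, and satisfy $B^n\psi_N=F([-N,N])B^n\psi\to B^n\psi$. Applying the displayed inequality to the increments $\psi_M-\psi_N=F((N,M])\psi\in\D$ shows that $(A^k\psi_N)_N$ is Cauchy, since $\|F((N,M])\psi\|$ and $\|F((N,M])B^n\psi\|$ are tails of convergent spectral integrals; hence $A^k\psi_N$ converges, and since $A^k$ is closed (it is a power of a self-adjoint operator) we conclude $\psi\in D(A^k)$. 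This proves (i) with $\ell=n$, and interchanging $A$ and $B$ yields (ii). The points deserving care are the verification of the closed-graph hypothesis and the density-plus-closedness bookkeeping in this last step; the remaining estimates are routine.
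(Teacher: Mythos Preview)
Your argument is correct and follows essentially the same route as the paper's: equivalence of the Fr\'echet topologies $t_A$ and $t_B$ on $\D$ yields an estimate $\|A^k\phi\|\le C\|B^\ell\phi\|$ (the paper reduces to $A,B\ge 0$ to get a single-term bound, you keep two terms), and then a density/closure step upgrades this to $D(B^\ell)\subseteq D(A^k)$. The only cosmetic differences are that the paper invokes its Lemma~\ref{topol} for the topological equivalence where you spell out the closed-graph argument, and it uses the abstract fact that $\D$ is a core for every power of $B$ where you construct the approximating sequence explicitly via spectral truncations.
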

\begin{proof} We put $\D = \D^\infty (A) = \D^\infty (B)$. Because of Lemma \ref{topol}, the topologies $t_A$ and $t_B$ are equivalent. Without loss of generality we assume that $A \geq 0$, $B\geq 0$; this makes the usual families of seminorms defining the two topologies directed. This implies that for each $ k \in{\mb N}$ there exist $\ell \in {\mb N}$ and $C_k > 0$:
$$ \|A^k \phi \| \leq C_k \|B^\ell \phi\|, \quad \phi \in \D.$$
But $\D$ is a core for any power of $B$, therefore for each $f \in
D(B^\ell)$ there exists a sequence $(f_n)$ of elements of $\D$
such that $f_n \to f$ and $(B^\ell f_n)$ is convergent. Then we
get
$$ \|A^k (f_n - f_m) \| \leq C_k \|B^\ell (f_n - f_m) \|\to 0$$
and therefore $f \in D\left( \overline{A^k _{\upharpoonright D}}\right)= D(A^k)$.\\
The proof of (ii) is similar.\\
Let us now assume that (i) and (ii) hold. For any $k \in {\mb N}$
we put
$$ \ell_k = \min \{\ell \in {\mb N}: D(B^\ell) \subset D(A^k)\}.$$
Then we have:
$$ \D^\infty (B) \subseteq \bigcap_{k=1}^\infty D(B^{\ell_k}) \subset \bigcap_{k=1}^\infty D(A^k) = \D^\infty(A).$$
In similar way the converse inclusion can be proven.
 \end{proof}
{\bf Example --} The previous proposition easily implies the
following well-known fact:
$$ \D^\infty (A^k) = \D^\infty (A), \quad \forall k \in {\mb N}.$$
since, (i) and (ii) hold, as is readily seen.
\begin{prop}
Let $A\geq 1$ and $B\geq 1$. Then if
$$ \D^\infty (A) = \D^\infty (B) $$
the following two conditions hold:
\begin{itemize}
\item[(i)] for each $ k \in{\mb N}$ there exist $\ell \in {\mb N}$ such that $A^kB^{-\ell}$ is bounded;
\item[(ii)] for each $ h \in{\mb N}$ there exist $m \in {\mb N}$ such that $B^hA^{-m}$ is bounded.
\end{itemize}
Conversely, if $D^\infty (A) \cap D^\infty (B)$ contains a common
core $\D_0$ for all powers of $A$ and $B$ and both (i) and (ii)
hold, then
$$ \D^\infty (A) = \D^\infty (B) $$
\end{prop}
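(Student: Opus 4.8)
The plan is to derive both implications from Proposition \ref{ppp} together with the equivalence of the Fr\'echet topologies $t_A$ and $t_B$ that the coincidence $\D^\infty(A)=\D^\infty(B)$ forces (exactly as recorded in the proof of Proposition \ref{ppp}), and to exploit throughout that, since $A\ge 1$ and $B\ge 1$, the defining families of seminorms $\{\|A^n\cdot\|\}_n$ and $\{\|B^n\cdot\|\}_n$ are already directed, while the negative powers $A^{-m}$ and $B^{-\ell}$ are everywhere-defined bounded operators with ranges $D(A^m)$ and $D(B^\ell)$ respectively.

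For the forward direction I would argue as follows. Put $\D:=\D^\infty(A)=\D^\infty(B)$; the equivalence of $t_A$ and $t_B$ gives, for a fixed $k\in\mb N$, some $\ell\in\mb N$ and $C_k>0$ with $\|A^k\phi\|\le C_k\|B^\ell\phi\|$ for all $\phi\in\D$. The next (and only mildly technical) step is to push this inequality from $\D$ up to all of $D(B^\ell)$: since $\D=\D^\infty(B)$ is a core for $B^\ell$ — the same fact already used in the proof of Proposition \ref{ppp} — every $\phi\in D(B^\ell)$ is a limit of vectors $\phi_n\in\D$ with $B^\ell\phi_n\to B^\ell\phi$, so $(A^k\phi_n)$ is Cauchy and closedness of $A^k$ yields $\phi\in D(A^k)$ with $\|A^k\phi\|\le C_k\|B^\ell\phi\|$. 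Feeding $\phi=B^{-\ell}\psi$ for arbitrary $\psi\in\Hil$ into this bound gives $\|A^kB^{-\ell}\psi\|\le C_k\|\psi\|$, which is (i); condition (ii) is the same argument with the roles of $A$ and $B$ interchanged.

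For the converse I would again aim directly at the two conditions of Proposition \ref{ppp}. Fix $k$ and take $\ell$ as provided by (i), so that $A^kB^{-\ell}$ is bounded on its natural domain $B^\ell\!\left(D(A^k)\cap D(B^\ell)\right)$. The role of the common core $\D_0$ is to make this domain dense: since $\D_0\subseteq\D^\infty(A)\cap\D^\infty(B)$ is a core for $B^\ell$, the set $B^\ell\D_0$ is dense in $\Hil$ (for any $\psi$, approximate $B^{-\ell}\psi$ in the graph norm of $B^\ell$) and is contained in that natural domain, so $A^kB^{-\ell}$ extends to a bounded $T\in\mc B(\Hil)$. Then for an arbitrary $\phi\in D(B^\ell)$, choosing $\phi_n\in\D_0\subseteq D(A^k)$ with $\phi_n\to\phi$ and $B^\ell\phi_n\to B^\ell\phi$, one has $A^k\phi_n=(A^kB^{-\ell})(B^\ell\phi_n)=T(B^\ell\phi_n)\to TB^\ell\phi$, and closedness of $A^k$ gives $\phi\in D(A^k)$; hence $D(B^\ell)\subseteq D(A^k)$, which is condition (i) of Proposition \ref{ppp}, and symmetrically (ii) here yields condition (ii) there. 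Proposition \ref{ppp} then delivers $\D^\infty(A)=\D^\infty(B)$.

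The main obstacle, as the sketch shows, is purely a matter of domains: the products $A^kB^{-\ell}$ are a priori defined only on possibly-thin subspaces, so the bare statement that $A^kB^{-\ell}$ is bounded does not by itself let one treat it as an operator on all of $\Hil$. In the forward direction this causes no trouble because $\D^\infty(B)$ is automatically a core for every power of $B$; in the converse it is precisely the hypothesis that $\D^\infty(A)\cap\D^\infty(B)$ contains a common core $\D_0$ that supplies the density needed to extend $A^kB^{-\ell}$ and to run the closed-graph approximation. Everything else reduces to the topological equivalence already recorded and to closedness of the powers of $A$ and $B$.
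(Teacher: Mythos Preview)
Your argument is correct and follows essentially the same route as the paper: the forward direction uses the equivalence of $t_A$ and $t_B$ (from the proof of Proposition \ref{ppp}) to obtain $\|A^k\phi\|\le C_k\|B^\ell\phi\|$ on $\D$ and then reads this as boundedness of $A^kB^{-\ell}$, while the converse turns (i) back into the seminorm inequality on the common core $\D_0$ and then invokes the closure argument of Proposition \ref{ppp} to obtain $D(B^\ell)\subseteq D(A^k)$. Your treatment of the domain issues --- extending the inequality from $\D$ to $D(B^\ell)$ before substituting $\phi=B^{-\ell}\psi$, and checking that $B^\ell\D_0$ is dense before extending $A^kB^{-\ell}$ --- is in fact more careful than the paper's terse rewriting, but the underlying idea is identical.
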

\begin{proof} Assume that $\D^\infty (A) = \D^\infty (B)=:\D$. As seen in the proof of Proposition \ref{ppp}, the equivalence of the topologies implies that for each $ k \in{\mb N}$ there exist $\ell \in {\mb N}$ and $C_k > 0$:
$$ \|A^k \phi \| \leq C_k \|B^\ell \phi\|, \quad \phi \in \D.$$
which can be written as
$$ \|A^kB^{-\ell} \phi\| \leq C_k \|\phi \|, \quad \phi \in \D.$$
The second condition can be proved in similar way.\\
Now, suppose that $\D_0$ is a common core for all powers of $A$
and $B$ and that (i) and (ii) hold. Then from (i) one gets that
for each $ k \in{\mb N}$ there exist $\ell \in {\mb N}$ and $C_k >
0$:
$$ \|A^k \phi \| \leq C_k \|B^\ell \phi\|, \quad \phi \in \D_0.$$
Proceeding as in the proof of Proposition \ref{ppp} one proves
that for these $k$ and $\ell$, $D(B^\ell) \subseteq D(A^k)$.
Analogously, condition (ii) implies (ii) of Proposition \ref{ppp}.

\end{proof}
\begin{prop} Let $B$ be a KR-perturbation and assume $B:\D^\infty(H_0)\to \D^\infty(H_0)$. Then $\D^\infty(H_0)\subseteq\D^\infty(H)$.
Moreover, if the families of seminorms are directed,
$$ \forall k, s \in {\mb N}\;\; \exists \ell \in {\mb N}, C_k>0: \|H_0^sH^k\phi\|\leq C_k \|H_0^\ell \phi\|, \quad \forall \phi \in \D^\infty(H_0).$$
\label{ggg}
\end{prop}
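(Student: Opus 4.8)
The plan is to reduce the whole statement to the observation that, under the hypotheses, both $H_0$ and $B$ are elements of the maximal O*-algebra $\LL^\dagger(\D^\infty(H_0))$, after which the quasi-uniform machinery of Section~2 applies verbatim. Write $\D:=\D^\infty(H_0)$. Since $B$ is a KR-perturbation, $D(B)\supseteq D(H_0)\supseteq\D$, and by hypothesis $B\D\subseteq\D$. As $B$ is symmetric, $B\subseteq B^*$, so $\D\subseteq D(B^*)$ with $B^*\restr\D=B\restr\D$; hence also $B^*\D\subseteq\D$, and therefore $B\in\LL^\dagger(\D)$ with $B^\dg=B$. Trivially $H_0\in\LL^\dagger(\D)$ (it is self-adjoint and $H_0\D\subseteq\D$ by shifting the index in $\bigcap_nD(H_0^n)$). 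Consequently $H=H_0+B$ maps $\D$ into $\D$, and since $H$ is self-adjoint on $D(H_0)$ with $D(H)=D(H_0)\supseteq\D$ and $H^*\restr\D=H\restr\D$, we conclude $H\in\LL^\dagger(\D)$, $H^\dg=H$.

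Next I would prove $\D^\infty(H_0)\subseteq\D^\infty(H)$ by induction on $n$, in the strengthened form: $\phi\in\D$ implies $\phi\in D(H^n)$ and $H^n\phi\in\D$. For $n=1$ this is $\D\subseteq D(H_0)=D(H)$ together with $H\D\subseteq\D$ from the previous paragraph. For the inductive step, if $\phi\in\D$ then $H^n\phi\in\D\subseteq D(H)$, so $\phi\in D(H^{n+1})$, and $H^{n+1}\phi=H(H^n\phi)\in\D$ since $H$ leaves $\D$ invariant. Intersecting over $n$ gives $\D=\D^\infty(H_0)\subseteq\bigcap_nD(H^n)=\D^\infty(H)$.

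For the quantitative part, note that since $\LL^\dagger(\D)$ is an algebra and $H_0,H\in\LL^\dagger(\D)$, the product $H_0^sH^k$ lies in $\LL^\dagger(\D)$ for all $s,k\in\mb N$, and its action on $\D$ is $\phi\mapsto H_0^s(H^k\phi)$, which is legitimate because $H^k\phi\in\D$ by the previous step. Thus $\phi\mapsto\|H_0^sH^k\phi\|$ is one of the seminorms defining $t_{\LD}$ on $\D$, and $t_{\LD}$ is equivalent to $t_{H_0}$. Under the stated directedness hypothesis on the defining seminorms (which holds, e.g., when $H_0\geq1$, since then $\|H_0^\ell\phi\|\leq\|H_0^{\ell'}\phi\|$ for $\ell\leq\ell'$), equivalence of the two topologies means exactly that each $t_{\LD}$-seminorm is dominated by a single $t_{H_0}$-seminorm; applied to $A=H_0^sH^k$ this gives $\ell\in\mb N$ and $C_k>0$ with $\|H_0^sH^k\phi\|\leq C_k\|H_0^\ell\phi\|$ for all $\phi\in\D$, which is the asserted inequality.

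The substance of the argument is concentrated in the first paragraph: the point is that the bare hypotheses ``$B$ symmetric, $D(B)\supseteq D(H_0)$, $B\D^\infty(H_0)\subseteq\D^\infty(H_0)$'' already force $B$ into $\LL^\dagger(\D^\infty(H_0))$, which is what upgrades the merely set-theoretic invariance of $\D^\infty(H_0)$ under $B$ into the continuity (hence the estimates) needed afterwards. If one wishes to avoid O*-algebra language, the same step can be carried out by the closed graph theorem applied to $B$ as an everywhere-defined operator on the Fr\'echet space $\D^\infty(H_0)[t_{H_0}]$, its closedness there following from the relative boundedness \eqref{kato}. The remaining steps are then routine bookkeeping.
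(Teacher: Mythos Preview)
Your proof is correct and follows the same line as the paper's, only spelled out in greater detail. The paper argues in two lines: (1) since $B$ (and trivially $H_0$) leave $\D^\infty(H_0)$ invariant, so does $H$, and $\D^\infty(H)$ is the \emph{largest} $H$-invariant domain, giving the inclusion; (2) the inequality is just the $t_{H_0}$-continuity of $H^k$ on $\D^\infty(H_0)$. Your induction on $n$ is precisely an explicit unpacking of the maximality fact in (1), and your appeal to $t_{\LD}=t_{H_0}$ (equivalently, the closed-graph argument you mention) is exactly the continuity invoked in (2).
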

\begin{proof} In this case, $\D^\infty(H_0)$ is left invariant by $H$; but $\D^\infty(H)$ is the largest domain with this property. Therefore $\D^\infty(H_0)\subseteq\D^\infty(H)$.\\
The given inequality follows easily from the continuity of $H^k$
in $\D^\infty(H_0)$.
\end{proof}

\noindent{\bf Remark --} The above inequality also says that
$t_{H_0}$ is, in general, finer than $t_H$.

\vspace{2mm} \noindent In order to get the equality of the two
domains some stronger condition on $B$ must be added. We have,
indeed:
\begin{prop} Let $B$ be a perturbation of $H_0$ such that $H:=H_0+B$ is selfadjoint on $D(H)=D(H_0)$. In order that
$$ \D^\infty(H) = \D^\infty(H_0)$$
it is necessary and sufficient that the following conditions hold:
\begin{itemize}
\item[(i)] $B: \D^\infty(H_0) \to \D^\infty(H_0)$;
\item[(ii)] $H$ is essentially self-adjoint in $\D^\infty(H_0)$;
\item[(iii)] the topologies $t_{H_0}$ and $t_{H}$ are equivalent on $\D^\infty(H_0)$
\end{itemize}
\label{CNS}
\end{prop}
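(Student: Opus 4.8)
The plan is to prove the two implications separately, exploiting the results already established. For the necessity of (i)–(iii), assume $\D^\infty(H)=\D^\infty(H_0)=:\D$. Condition (iii) is then immediate from Lemma~\ref{topol}(1). Condition (i) follows because $H=H_0+B$ maps $\D$ into $\D$ (since $H$ is continuous on $\D[t_H]=\D[t_{H_0}]$, by reflexivity, exactly as in the proof of Lemma~\ref{topol}), and $H_0$ maps $\D=\D^\infty(H_0)$ into itself by definition; hence $B=H-H_0$ maps $\D$ into $\D$. Condition (ii) holds because $\D=\D^\infty(H)$ contains a core for $H$ — indeed $\D^\infty(S)$ is always a core for any selfadjoint $S$, being the domain of analytic (in fact $C^\infty$) vectors — so $H$ is essentially selfadjoint on $\D^\infty(H_0)$.

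For the sufficiency, assume (i)–(iii). From (i) and Proposition~\ref{ggg} (with $s=0$) we already get $\D^\infty(H_0)\subseteq\D^\infty(H)$; the work is the reverse inclusion. The idea is to run the argument of Proposition~\ref{ggg} with the roles of $H_0$ and $H$ interchanged, which requires knowing that $H_0$ maps $\D^\infty(H)$ into itself. To see this, write $H_0=H-B$. We know $H$ maps $\D^\infty(H)$ into itself; so it suffices to show $B$ maps $\D^\infty(H)$ into $\D^\infty(H)$. This is where (ii) and (iii) enter. By (iii) the two $C^\infty$-topologies agree on $\D^\infty(H_0)$, and by (ii) $\D^\infty(H_0)$ is a core for $H$ and for all its powers; combining these lets one show that $\D^\infty(H_0)$ is actually \emph{dense} in $\D^\infty(H)[t_H]$ and, being already contained in it and carrying the same topology, that $\D^\infty(H_0)=\D^\infty(H)$ as sets. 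Concretely: take $\phi\in\D^\infty(H)$; since $\D^\infty(H_0)$ is a core for each $H^k$, there is a net $\phi_n\in\D^\infty(H_0)$ with $\phi_n\to\phi$ in $t_H$; but on $\D^\infty(H_0)$ the topology $t_H$ coincides with $t_{H_0}$, so $(\phi_n)$ is $t_{H_0}$-Cauchy; completeness of $\D^\infty(H_0)[t_{H_0}]$ gives a $t_{H_0}$-limit $\psi\in\D^\infty(H_0)\subseteq\D^\infty(H)$, and since $t_{H_0}$ is finer than $t_H$ on $\D^\infty(H_0)$ (Remark after Proposition~\ref{ggg}) while both $\phi_n\to\psi$ in $t_H$, we conclude $\phi=\psi\in\D^\infty(H_0)$.

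The main obstacle is the last step: one must be careful that the net $(\phi_n)$ extracted as an approximating core-net for \emph{all} powers $H^k$ simultaneously can be chosen so that it is $t_{H_0}$-Cauchy, not merely $t_H$-Cauchy term by term. This is precisely what hypothesis (iii) buys us — without the equivalence of the two topologies on $\D^\infty(H_0)$, $t_H$-convergence there would not propagate to $t_{H_0}$-convergence, and the completeness argument would collapse. A secondary technical point is justifying that $\D^\infty(H_0)$ is a core for every power of $H$: this follows from (ii) together with the standard fact that, for a selfadjoint operator, a subspace of $C^\infty$-vectors that is a core for the operator and is invariant under it (which $\D^\infty(H_0)$ is, by (i) applied to $H=H_0+B$) is automatically a core for every power. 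Once these two points are in place, the directedness of the seminorm families (arranged as in Proposition~\ref{ppp} by passing to $|H_0|$, $|H|$, or simply noting $H_0\ge 1$ and, after a shift, $H$ bounded below) makes the remaining estimates routine.
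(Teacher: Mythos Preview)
Your argument is correct and follows essentially the same route as the paper. For necessity you invoke exactly the same three facts; for sufficiency you obtain $\D^\infty(H_0)\subseteq\D^\infty(H)$ from (i) and Proposition~\ref{ggg}, and then for the reverse inclusion you argue that $\D^\infty(H_0)$ is $t_H$-dense in $\D^\infty(H)$ (via (ii) and invariance) and $t_H$-complete (via (iii) and the completeness of $\D^\infty(H_0)[t_{H_0}]$), hence already equal to $\D^\infty(H)$.

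The only stylistic difference is that the paper packages the density step more compactly: from (ii) it writes $\D^\infty\!\left(\overline{H_{\upharpoonright \D^\infty(H_0)}}\right)=\D^\infty(H)$ and identifies the left-hand side with the $t_H$-completion of $\D^\infty(H_0)$, then invokes (iii) to conclude this completion is $\D^\infty(H_0)$ itself. Your version unwinds this into an explicit ``approximate, transfer Cauchy via (iii), use Fr\'echet completeness'' argument, and you correctly flag the one genuine subtlety---getting a single sequence that approximates in all $H^k$-seminorms simultaneously---which the paper absorbs into the cited completion fact. The detour through ``run Proposition~\ref{ggg} with roles reversed'' is unnecessary and can be dropped; the density--completeness argument you give afterwards already does the job.
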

\begin{proof} The necessity of (i) is obvious. As for (ii), it is well-known that $\D^\infty(H)$ is a core for $H$ (and for all its powers). The necessity of (iii) follows from (1) in Lemma \ref{topol}.\\
We now prove the sufficiency.\\
First, by Proposition \ref{ggg} and (i) it follows that
$\D^\infty(H_0) \subseteq \D^\infty(H)$ and since $H$ is
essentially self-adjoint in $\D^\infty(H_0)$,
$$ \D^\infty \left( \overline{H_{\upharpoonright \D^\infty(H_0)}}\right) = \D^\infty(H)  .$$
But as is well known, the domain on the left hand side is the
completion of $\D^\infty(H_0)$ in the topology $t_H$. The
equivalence of $t_H$ and $t_{H_0}$, in turn implies that
$\D^\infty(H_0)$ is complete under $t_H$ and so the statement is
proved.
\end{proof}

\noindent{\bf Remark --} If $B$ is bounded, then $H= H_0+B$ is
automatically essentially self-adjoint in $\D^\infty(H_0)$

\vspace{3mm} \noindent{\bf Example --}
Let $H_0= p^2 +q^2$; then $\D^\infty (H_0) = {\mc S}(\mb R)$. Let $B= \alpha q$ with $\alpha \in {\mb R}$.\\
Then it is easily seen that $H = p^2 +q^2 + \alpha q$ leaves ${\mc S}(\mb R)$ invariant.\\
Since $H= p^2 + (q- \beta)^2 + \beta^2$ with $\beta = \alpha/2$, it is clear that ${\mc S}(\mb R)$ is a domain of essential self-adjointness for $H$.\\
The equivalence of the topologies $t_{H_0}$ and $t_H$ can be
proven with easy estimates of the respective seminorms. Thus
Proposition \ref{CNS}, leads us to conclude that $ \D^\infty (H_0)
= \D^\infty (H) $.

As a consequence of Proposition \ref{CNS}, we consider now the special case of a perturbation {\em weakly} commuting with $H_0$.\\
Let ${\mc L}^\dagger(\D,\Hil)$ denote the space of all closable
operators $A$ in $\Hil$ such that $D(A)=\D$, $D(A^*)\subset \D$.
As for ${\mc L}^\dagger(\D)$, we put $A^\dg = A^*_{\restr \D}$.

Now, if $\A$ is a $\dagger$-invariant subset of ${\mc
L}^\dagger(\D,\Hil)$, the {\em weak unbounded commutant}
$\A'_{\sigma}$ of $\A$ is defined as
$$ \A'_{\sigma}= \{ Y \in {\mc L}^\dagger(\D,\Hil): <Xf,Y^\dagger g>=<Yf,X^\dagger g>, \forall f,g \in D;\, \forall X \in \A\}.$$
If $T$ is a self-adjoint operator in $\Hil$, we can consider the
O*-algebra ${\mc P}(T)$ generated by $T$ on $\D^\infty(T)$. It is
well-known \cite{schm} that ${\mc P}(T)'_\sigma = \{T\}'_\sigma$.
Furthermore, any $Y \in  \{T\}'_\sigma$ leaves $\D^\infty(T)$
invariant. We now apply these facts to our situation:
\begin{cor} Let $B$ be a perturbation of $H_0$. Assume that $B$ satisfies the conditions:
\begin{itemize}
\item[(i)]$ <H_0f, Bg> = <Bf,H_0g>, \forall f,g \in \D^\infty(H_0)$
\item[(ii)] $H$ is essentially self-adjoint in $\D^\infty(H_0)$
\item[(iii)] $\|H_0f\| \leq \|Hf\|, \quad \forall f \in \D^\infty(H_0)$
\end{itemize}
then  $\D^\infty(H_0)=\D^\infty(H)$.
\end{cor}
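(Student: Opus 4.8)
The plan is to verify the three hypotheses of Proposition \ref{CNS} for $H_0$ and $H=H_0+B$. Condition (ii) of that proposition is exactly our (ii), so what must be produced from (i) and (iii) is: (A) $B$ maps $\D:=\D^\infty(H_0)$ into itself; and (B) the topologies $t_{H_0}$ and $t_H$ coincide on $\D$. Throughout one uses that $\D$ is dense and is a core for every power $H_0^n$, and that $B$ is symmetric with $D(B)\supseteq D(H_0)\supseteq\D$.

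For (A): hypothesis (i) says precisely that $B\restr\D$ weakly commutes with $H_0$, i.e.\ it sits in the kind of weak unbounded commutant discussed above, so one expects it to leave $\D^\infty(H_0)$ invariant; I would make this rigorous by a bootstrap. An induction on $n$ gives $\langle H_0^nf,Bg\rangle=\langle Bf,H_0^ng\rangle$ for all $f,g\in\D$: the case $n=0$ is the symmetry of $B$, the case $n=1$ is hypothesis (i), and for the step one writes $H_0^{n+1}f=H_0(H_0^nf)$, applies (i) to the pair $(H_0^nf,g)$, then the symmetry of $B$, then the inductive hypothesis to the pair $(f,H_0g)$. Fixing $f\in\D$ and using the symmetry of $B$ once more, $\langle Bf,H_0^ng\rangle=\langle H_0^nf,Bg\rangle=\langle BH_0^nf,g\rangle$ for every $g\in\D$, so $g\mapsto\langle Bf,H_0^ng\rangle$ is norm-bounded; since $\D$ is a core for the self-adjoint operator $H_0^n$, this forces $Bf\in D(H_0^n)$, and as $n$ is arbitrary, $Bf\in\D$. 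The $n=1$ instance of the same identities also yields $\langle H_0Bf,g\rangle=\langle BH_0f,g\rangle$ for all $g\in\D$, whence $H_0B=BH_0$ on $\D$; consequently $H=H_0+B$ maps $\D$ into itself and commutes with $H_0$ there.

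For (B): since $H(\D)\subseteq\D$ and $H\restr\D$ is essentially self-adjoint, $H$ is a $t_{H_0}$-continuous endomorphism of the Fr\'echet space $\D[t_{H_0}]$ (closed graph theorem), so $\D\subseteq\D^\infty(H)$ and $\|H^kf\|\le C_k\|H_0^{\ell}f\|$ on $\D$ — this is Proposition \ref{ggg} — and hence $t_{H_0}$ is finer than $t_H$ on $\D$. For the opposite direction I would iterate (iii): using $H_0H=HH_0$ on $\D$, an induction gives $\|H_0^nf\|\le\|H^nf\|$ for all $f\in\D$, since $\|H_0^{n+1}f\|=\|H_0(H_0^nf)\|\le\|H(H_0^nf)\|=\|H_0^n(Hf)\|\le\|H^n(Hf)\|=\|H^{n+1}f\|$, where the first inequality is (iii) applied to $H_0^nf\in\D$ and the second is the inductive hypothesis applied to $Hf\in\D$. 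Hence $t_H$ is also finer than $t_{H_0}$, so the two topologies agree on $\D$.

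Having verified conditions (i), (ii), (iii) of Proposition \ref{CNS}, that proposition delivers $\D^\infty(H)=\D^\infty(H_0)$. I expect the main obstacle to be step (A): passing from weak commutation with $H_0$ alone to weak commutation with all powers $H_0^n$, and then extracting from it the norm-boundedness that puts $Bf$ back into $D(H_0^n)$ — it is precisely the symmetry of $B$ together with the core/reflexivity properties of $\D^\infty(H_0)$ that make this go through. Once (A) is in hand, step (B) and the appeals to Propositions \ref{ggg} and \ref{CNS} are routine.
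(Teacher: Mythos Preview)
Your proof is correct and follows exactly the paper's route: verify the hypotheses of Proposition \ref{CNS} by using (i) to get that $B$ (and hence $H$) leaves $\D^\infty(H_0)$ invariant and commutes there with $H_0$, and then run the very same induction $\|H_0^n f\|\le\|H^n f\|$ from (iii) to compare the topologies. The only difference is cosmetic: you spell out the invariance/bootstrap argument for (A) by hand, whereas the paper dispatches it in one line by invoking the remark just before the corollary that any element of the weak commutant $\{H_0\}'_\sigma$ leaves $\D^\infty(H_0)$ invariant.
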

\begin{proof} Condition (i) implies that $B$ leaves $\D^\infty(H_0)$ invariant; therefore $H$ is $t_{H_0}$-continuous (together with all its powers). So it remains to check that $t_H$ is finer than $t_{H_0}$ in order to apply Proposition \ref{CNS}. \\
We will prove by induction that
$$ \|H_0^nf\| \leq \|H^nf\|, \quad \forall f \in \D^\infty(H_0).$$
The case $n=1$ is exactly condition (iii). Now we assume the
statement true for $n-1$. Then we get:
\begin{eqnarray*} \|H_0^n f\|&=& \|H_0(H_0^{n-1} f)\| \\&\leq& \|H (H_0^{n-1} f)\|\\&=&
\|H_0^{n-1}Hf\| \leq \|H^n f\|, \quad \forall f \in
\D^\infty(H_0).
\end{eqnarray*}
since, by (i), $H_0$ and $H$ commute (algebraically) on
$\D^\infty(H_0)$.
\end{proof}

\section{Dynamical aspects}
We now come back to the dynamical problem posed at beginning of
the paper concerning the perturbative situation and again we will
consider the case where $H$ exists. So far, we were able to prove
the convergence of the dynamics only at infinitesimal level
(Corollary \ref{deriv}). The problem of the convergence of
$\,\alpha_L^t (A)$ to $\alpha^t (A)$ is not completely solved
neither in the simple case we are dealing with. Of course, given
$H$ and its spectral projections $Q_L$ as seen in the previous
Section, we can always prove, setting $\hat H_L=Q_L H Q_L$, that
${\rm e}^{i\hat H_Lt}A{\rm e}^{-i\hat H_Lt}$ converges to ${\rm
e}^{iHt}A{\rm e}^{-iHt}$ for any $A$ in $\LD$. What makes here the
difference (and this is the spirit of the whole paper), is that we
are defining the cut-offed hamiltonian $H_L=Q_L^0 H Q_L^0$ via the
spectral projections of the unperturbed hamiltonian $H_0$. This is
of practical interest since  only in very few instances (finite
discrete systems, harmonic oscillators, hydrogen atoms,...) the
spectral projections of $H$ are known. On the other hand, $H_0$
can be chosen with a sufficient freedom to guarantee the knowledge
of the $Q_L^0$.

With this in mind, we consider the problem of finding conditions under which ${\rm e}^{iH_Lt}$ converges, with respect to the topology $\tau_*^{\D}$, to ${\rm e}^{iHt}$.\\
To this aim, we define the operator function
$$ g_L(t)= {\rm e}^{iH_Lt}- {\rm e}^{iHt}= i\int_0^t {\rm e}^{iH_L(t-t')}(H_L-H){\rm e}^{iHt'}dt',$$
the latter equality being got by solving the equation
$$ \dot g_L(t) = i H_L g_L(t) + i(H_L-H){\rm e}^{iHt} $$
which comes directly from the definition of $g_L(t)$. Easy
estimates allow to state the following
\begin{lemma} For each $k \in {\mb N}$ there exists $s \in {\mb N}$ such that
$$\lim_{L\to \infty}\|H_0^{-s}(H_L-H)H_0^k\|=0.$$
\label{59.1}
\end{lemma}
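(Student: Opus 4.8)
The plan is to imitate the decomposition used in the proof of Lemma~\ref{2.2} and reduce the statement to the single spectral estimate $\|H_0^{-\ell}(I-Q^0_L)\|\le L^{-\ell}$ ($\ell\ge1$) together with the $t_{H_0}$-continuity of suitable elements of $\LD$. First I would record the two ingredients. Under assumptions (a)--(c) one has $\D:=\D^\infty(H_0)=\D^\infty(H)$, and by Lemma~\ref{topol} the topologies $t_{H_0}$ and $t_H$ coincide on $\D$; in particular $H$ maps $\D$ continuously into itself, so $H\in\LD$ and $H_0^jH\in\LD$ for every $j\in{\mb N}$. Since $t_{\LD}\equiv t_{H_0}$ and the seminorms $\|H_0^n\cdot\|$ are directed (as $H_0\ge1$), for each $j$ there will be $m_j\in{\mb N}$ and $c_j>0$ with $\|H_0^jH\phi\|\le c_j\|H_0^{m_j}\phi\|$ for all $\phi\in\D$. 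The second ingredient, already in the proof of Lemma~\ref{2.2}, is $\|H_0^{-\ell}(I-Q^0_L)\|\le L^{-\ell}$ for $\ell\ge1$; I would also use $\|H_0^{-\ell}\|\le1$ for $\ell\ge0$ and the fact that $Q^0_L$ and $P_L:=I-Q^0_L$ commute with $H_0$ and leave $\D$ invariant.

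Next comes the algebraic step: exactly as $B-Q^0_LBQ^0_L=B(I-Q^0_L)+(I-Q^0_L)BQ^0_L$ in the proof of Lemma~\ref{2.2}, I would write
$$H-H_L=H-Q^0_LHQ^0_L=HP_L+P_LHQ^0_L.$$
Testing $H_0^{-s}(H_L-H)H_0^k$ on unit vectors $\phi,\psi\in\D$ and moving $P_L$, $Q^0_L$ and powers of $H_0$ across the inner product — admissible since every vector occurring stays in $\D=\D^\infty(H_0)=\D^\infty(H)$ — and using $H^\dagger=H$, I would arrive at
\begin{align*}
\langle H_0^{-s}(H_L-H)H_0^k\phi,\psi\rangle
={}&-\bigl\langle H_0^{-1}P_L\phi,\,H_0^{k+1}HH_0^{-s}\psi\bigr\rangle\\
&-\bigl\langle Q^0_L\phi,\,H_0^kHH_0^{-s}P_L\psi\bigr\rangle.
\end{align*}
Each summand is then estimated separately. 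The first is at most $\|H_0^{-1}P_L\|\,\|H_0^{k+1}HH_0^{-s}\psi\|\le L^{-1}c_{k+1}$ as soon as $s\ge m_{k+1}$, using $\|H_0^{m_{k+1}-s}\psi\|\le\|\psi\|$. The second is at most $\|H_0^kHH_0^{-s}P_L\psi\|\le c_k\|H_0^{m_k-s}P_L\psi\|\le c_kL^{\,m_k-s}$, which is $O(L^{-1})$ once $s\ge m_k+1$. Choosing $s:=\max\{m_{k+1},\,m_k+1\}$ and passing to the supremum over unit vectors of $\D$ gives $\|H_0^{-s}(H_L-H)H_0^k\|\le(c_{k+1}+c_k)/L\to0$.

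The point demanding care is the bookkeeping in the second summand. The \emph{small} factor $H_0^{-\ell}(I-Q^0_L)$ must be kept at one end of the product while the cut-off projection sitting \emph{between} $H_0^{-s}$ and the unbounded $H$ — and not commuting with $H$ — is absorbed for free; a crude estimate of $H_0^{-s}(I-Q^0_L)HH_0^k$ only yields a bound growing with $L$. Passing to the sesquilinear-form picture and exploiting $H^\dagger=H$ together with $[Q^0_L,H_0]=0$ is precisely what relocates the negative power of $H_0$ next to $I-Q^0_L$ at one end, leaving at the other a factor $H_0^jHH_0^{-s}$ bounded uniformly in $L$ by the continuity of $H_0^jH$; the rest is the elementary spectral bound.
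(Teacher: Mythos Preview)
The paper does not actually prove this lemma; it only says ``Easy estimates allow to state the following'' and moves on. Your argument is correct and supplies exactly those estimates: you reuse the decomposition $H-Q^0_LHQ^0_L=HP_L+P_LHQ^0_L$ from Lemma~\ref{2.2}, the spectral bound $\|H_0^{-\ell}(I-Q^0_L)\|\le L^{-\ell}$, and the $t_{H_0}$-continuity of $H_0^jH$ (available since $H\in\LD$ by assumptions (a)--(c) and Lemma~\ref{topol}). The passage to the sesquilinear form and the observation that $Q^0_L$, $P_L$, $H_0^{\pm n}$ all preserve $\D$ are handled carefully, and the choice $s=\max\{m_{k+1},m_k+1\}$ is the natural one. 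This is precisely the proof the paper leaves implicit.
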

then we have
\begin{prop} If there exists $T>0$ such that, for each $f \in {\mc F}$, $s\in {\mb N} \cup \{0\}$ there exists $M=M({T, f,s})$ such that
 $$ \int_0^t \|f(H_0) {\rm e}^{iH_L(t-t')}H_0^s\|dt' < M, \quad t \in [0,T] $$
then
$$\tau_*^\D-\lim_{L\to \infty}g_L(t)=0.$$
\label{60.1}
\end{prop}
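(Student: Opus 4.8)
The plan is to estimate the seminorm $\|f(H_0)g_L(t)H_0^k\|$ (and, by the remark on symmetric elements, only this ``one-sided'' version is needed) using the integral representation
$$ g_L(t)= i\int_0^t {\rm e}^{iH_L(t-t')}(H_L-H){\rm e}^{iHt'}\,dt'. $$
First I would insert, for a suitable $s\in{\mb N}$ to be fixed later, the factorization $H_L-H = H_0^s\bigl(H_0^{-s}(H_L-H)H_0^k\bigr)H_0^{-k}$ inside the integral, so that
$$ f(H_0)g_L(t)H_0^k = i\int_0^t \bigl(f(H_0){\rm e}^{iH_L(t-t')}H_0^s\bigr)\bigl(H_0^{-s}(H_L-H)H_0^k\bigr)\bigl(H_0^{-k}{\rm e}^{iHt'}H_0^k\bigr)\,dt'. $$
Here $H_0^{-k}{\rm e}^{iHt'}H_0^k$ makes sense as a bounded operator because, by Lemma \ref{topol}, the topology $\tau_*^\D$ may be described via $H$ as well as via $H_0$, so ${\rm e}^{iHt'}\in\LD$ has finite $\tau_*^\D$-seminorms; equivalently ${\rm e}^{iHt'}$ maps $\D$ continuously into $\D$, whence $H_0^{-k}{\rm e}^{iHt'}H_0^k$ extends to a bounded operator, and its norm is bounded uniformly for $t'\in[0,T]$ by a standard Fr\'echet-space continuity argument (or by the group property and the local boundedness of $t'\mapsto {\rm e}^{iHt'}$ in $\LD[\tau_*^\D]$).

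The three factors are then handled separately. Taking operator norms and using $\|XYZ\|\le\|X\|\,\|Y\|\,\|Z\|$,
$$ \|f(H_0)g_L(t)H_0^k\| \le \sup_{t'\in[0,T]}\|H_0^{-k}{\rm e}^{iHt'}H_0^k\|\cdot \sup_{t'\in[0,t]}\|H_0^{-s}(H_L-H)H_0^k\|\cdot \int_0^t\|f(H_0){\rm e}^{iH_L(t-t')}H_0^s\|\,dt'. $$
By Lemma \ref{59.1} I choose $s$ so that the middle supremum tends to $0$ as $L\to\infty$; it is in particular bounded in $L$. The first factor is a constant $C_k(T)$ independent of $L$. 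The integral in the last factor is, by the standing hypothesis of the Proposition, bounded by $M(T,f,s)$ uniformly in $L$ and in $t\in[0,T]$. Hence
$$ \sup_{t\in[0,T]}\|f(H_0)g_L(t)H_0^k\| \le C_k(T)\,M(T,f,s)\,\sup_{t'\in[0,t]}\|H_0^{-s}(H_L-H)H_0^k\| \xrightarrow[L\to\infty]{} 0, $$
and the symmetric term $\|H_0^k g_L(t)f(H_0)\|$ is estimated in the same way (writing $g_L(t)=g_L(t)^\dg$-type manipulations, or noting $g_L(t)$ is a combination of symmetric elements), giving $\tau_*^\D$-$\lim_L g_L(t)=0$, as required.

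The step I expect to be the real obstacle is the control of the factor $H_0^{-k}{\rm e}^{iHt'}H_0^k$, i.e. verifying that ${\rm e}^{iHt'}$ is a well-behaved element of $\LD[\tau_*^\D]$ with seminorms bounded uniformly for $t'$ in a compact interval. This rests on assumptions (a)--(c) (so that $\D^\infty(H)=\D^\infty(H_0)=\D$ and the two quasi-uniform topologies coincide, Lemma \ref{topol}) and on the fact that ${\rm e}^{iHt'}$ leaves $\D^\infty(H)$ invariant --- which is immediate from the spectral theorem for $H$ --- together with a uniform-boundedness / closed-graph argument in the Fr\'echet space $\D[t_H]$ to get local boundedness in $t'$. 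Everything else is a triangle inequality, the product inequality for the norm in ${\mathcal B}(\Hil)$, and a direct appeal to Lemma \ref{59.1} and to the hypothesis of the Proposition; no delicate estimate is needed once the factorization above is set up.
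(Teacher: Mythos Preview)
Your overall strategy --- split the integrand into three norm factors and invoke Lemma \ref{59.1} plus the hypothesis --- is the right one, but there is a gap in the factorization you set up. From the fact that ${\rm e}^{iHt'}$ is $t_{H_0}$-continuous on $\D$ you only get, for each $k$, the existence of \emph{some} $m\ge k$ and $C>0$ with $\|H_0^k{\rm e}^{iHt'}\phi\|\le C\|H_0^m\phi\|$; this yields boundedness of $H_0^{-m}{\rm e}^{iHt'}H_0^k$ (equivalently of $H_0^{k}{\rm e}^{-iHt'}H_0^{-m}$), not of $H_0^{-k}{\rm e}^{iHt'}H_0^k$. In general ${\rm e}^{iHt'}$ need not preserve the single layer $D(H_0^k)$, so the operator you wrote may fail to be bounded. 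Your proof is easily repaired by inserting $H_0^{m}H_0^{-m}$ (with this possibly larger $m$) instead of $H_0^{k}H_0^{-k}$, and then applying Lemma \ref{59.1} with $k$ replaced by $m$; the uniformity in $t'\in[0,T]$ then follows from $\|H_0^k{\rm e}^{iHt'}\phi\|\le C\|H^{\ell}\phi\|$ with constants independent of $t'$, which the equivalence of topologies gives directly.

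The paper avoids this complication by a simpler device: it estimates the seminorm in the $H$-description of $\tau_*^\D$ first. Since ${\rm e}^{iHt'}$ commutes with $H^k$ and is unitary, one has
\[
\|f(H)\,{\rm e}^{iH_L(t-t')}(H_L-H){\rm e}^{iHt'}H^k\|=\|f(H)\,{\rm e}^{iH_L(t-t')}(H_L-H)H^k\|,
\]
so the third factor never appears. Only then does the paper pass back to the $H_0$-seminorms (using part (2) of Lemma \ref{topol}) to obtain $\|f_1(H_0){\rm e}^{iH_L(t-t')}(H_L-H)H_0^{k_1}\|$, split off $H_0^sH_0^{-s}$, and conclude via Lemma \ref{59.1} and the hypothesis. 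This is both shorter and sidesteps the question of whether ${\rm e}^{iHt'}$ respects a fixed power of $H_0$.
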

\begin{proof} We have indeed:
\begin{eqnarray*}
\|g_L(t)\|^{f,k} &\leq & \int _0^t \|f(H) {\rm e}^{iH_L(t-t')}(H_L-H){\rm e}^{-iHt'}H^k\|dt'\\
&=&\int _0^t \|f(H) {\rm e}^{iH_L(t-t')}(H_L-H)H^k\|dt'\\
&\leq & C \int_0^t \|f_1(H_0) {\rm e}^{iH_L(t-t')}(H_L-H)H_0^{k_1}\|dt'\\
&\leq & C \int_0^t \|f_1(H_0) {\rm e}^{iH_L(t-t')}H_0^s\|dt' \cdot
\|H_0^{-s}(H_L-H)H_0^{k_1}\|
\end{eqnarray*}
for suitable $C>0$, $f_1 \in {\mc F}$ and $k_1 \in {\mb N}$ and
with $s$ chosen, correspondingly to $k_1$ so that Lemma \ref{59.1}
can be used.
\end{proof}
This proposition implies that the Schr\"odinger dynamics can be
defined. The analysis of the Heisenberg dynamics is more
complicated and will not be considered here.

The assumptions of Proposition \ref{59.1} are indeed quite strong.
They are, of course, verified if the perturbation $B$ commutes
with $H_0$. But this is, clearly, a trivial situation. We will now
discuss a non-trivial example where they are satisfied.

\vspace{2mm} \noindent{\bf Example --} Let $H_0=a^\dagger a$ and
$B=a^n$, $n$ being an integer larger than 1. The conditions on the
domains of the operators discussed in Section 2 are satisfied, as
it is more easily seen working in the configuration space, so that
our procedure can be applied. Here
$Q^0_L=\Pi_0^0+\Pi_1^0+\Pi_2^0+.....+\Pi_L^0$, where $\Pi_i^0$ is
the projection operator of $H_0$,
$H_0=\sum_{l=0}^{\infty}l\Pi_l^0$. Using the algebraic rules
discussed in \cite{bagjmp}, and, in particular the commutation
rules $Q^0_La=aQ_{L+1}^0$ and $\Pi_l^0a=a\Pi_{l+1}^0$, we find
that $H_l=Q^0_LHQ^0_L=HQ^0_L$.

It is a straightforward computation now to check that for any
$f\in {\mc F}$ and for any natural $s$,
$\|f(H)e^{iH_L\tau}H^s\|=\|f(H)(H+(e^{iH\tau}-1)a^nP_{L,n}^0)^s\|$,
where we have defined the following orthogonal projection operator
$$
P_{L,n}^0=\Pi_{L+1}^0+\Pi_{L+2}^0+....+\Pi_{L+n}^0 = Q^0_{L+n}-
Q^0_L.
$$
These seminorm can be estimated for each value of $s$ and it is
not difficult to check that they are bounded by a constant which
depends on $f$, $s$ and $n$ (obviously) but not on $L$ and $\tau$.
Therefore the main hypothesis of Proposition \ref{59.1} is
verified and so the Schr\"odinger dynamics can be defined. We give
the estimate of the above seminorm here only in the easiest non
trivial case, $s=1$. The extension to $s>1$ only increases the
length of the computation but does not affect the result. \beano
\|f(H)\left(H+(e^{iH\tau}-1)a^nP_{L,n}^0\right)\|\leq &&\!\!\!\!\|f(H)H\|+\|f(H)(e^{iH\tau}-1)a^n\|\|P_{L,n}^0\|\\
&&\leq \|f(H)H\|+2\|f(H)a^n\|, \enano which is bounded and
independent of both $L$ and $\tau$.

The same strategy can also be applied to the more general
situation when $B$ is any given polynomial in $a$ and $a^\dagger$.

\vspace{2mm} In order to find more cases in which Proposition
\ref{60.1} can be applied, we begin with the following
\begin{lemma} For each $f \in {\mc F}$ and for each $k, \ell \in {\mb N}$ we have:
$$ \lim_{L\to \infty}\|f(H)(H_L^\ell - H^\ell)H^k\|=0$$
\label{61.1}
\end{lemma}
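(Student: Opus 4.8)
The plan is to prove this by induction on $\ell$, reducing the statement about the difference of $\ell$-th powers to the statement about the difference of first powers, which is essentially Lemma \ref{59.1}. The base case $\ell = 1$ asks for $\lim_{L\to\infty}\|f(H)(H_L - H)H^k\| = 0$; by the equivalence of the seminorm systems in $H$ and $H_0$ from Lemma \ref{topol} (applied as in the proof of Proposition \ref{60.1}, inserting resolvent factors), one has $\|f(H)(H_L-H)H^k\| \leq C\|f_1(H_0)(H_L-H)H_0^{k_1}\| \leq C\|f_1(H_0)H_0^{s}\|\,\|H_0^{-s}(H_L-H)H_0^{k_1}\|$ for suitable $C, f_1, k_1$ and $s$ chosen so that Lemma \ref{59.1} applies; since $f_1 \in {\mc F}$ forces $f_1(H_0)H_0^s$ to be bounded, the right-hand side tends to $0$.

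For the inductive step I would use the telescoping identity
$$ H_L^\ell - H^\ell = (H_L - H)H_L^{\ell-1} + H\,(H_L^{\ell-1} - H^{\ell-1}), $$
so that
$$ \|f(H)(H_L^\ell - H^\ell)H^k\| \leq \|f(H)(H_L - H)H_L^{\ell-1}H^k\| + \|f(H)H(H_L^{\ell-1} - H^{\ell-1})H^k\|. $$
The second term is handled by the induction hypothesis, after absorbing the extra factor $H$ into the function slot: $\|f(H)H(H_L^{\ell-1}-H^{\ell-1})H^k\| = \|(f(H)H)(H_L^{\ell-1}-H^{\ell-1})H^k\|$, and $f(H)H$ is again a bounded function of $H$ of the required decay type (in fact one can fold it into a new $\tilde f \in {\mc F}$ up to a constant, or simply observe $f(x)x$ is bounded and continuous and still rapidly decreasing). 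For the first term, I would commute things so as to replace $H_L^{\ell-1}$ by $H^{\ell-1}$ modulo lower-order errors that are again controlled by the induction hypothesis and the base case, writing $(H_L-H)H_L^{\ell-1}H^k = (H_L-H)H^{\ell-1}H^k + (H_L-H)(H_L^{\ell-1}-H^{\ell-1})H^k$; the first summand is a base-case estimate (with $k$ replaced by $k+\ell-1$) and the second is dominated by a bounded factor times $\|g(H_0)(H_L^{\ell-1}-H^{\ell-1})H_0^{k'}\|$, which vanishes by induction after the usual change from $H$- to $H_0$-seminorms.

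The main obstacle is bookkeeping rather than conceptual: one must keep careful track of how the index $k$, the function $f$, and the auxiliary exponent $s$ from Lemma \ref{59.1} change under each telescoping step, and verify that at every stage the "function of $H$" sitting on the left is still bounded (equivalently, can be absorbed into a member of ${\mc F}$ up to a constant) and that the resolvent powers needed to pass between $H$ and $H_0$ do not outrun the polynomial growth available. A clean way to organize this is to first prove, as an auxiliary claim, that for every $f \in {\mc F}$ and every $k, r \in {\mb N}$ the operators $f(H)H_L^{r}H^k$ are uniformly bounded in $L$ — which follows from the fact that $H_L = Q^0_L H Q^0_L \leq H$ in the relevant sense together with the equivalence of topologies — and then the two displayed telescoping estimates close the induction immediately.
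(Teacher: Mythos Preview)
Your overall plan—induction on $\ell$, with the base case supplied by Lemma~\ref{2.2}/Lemma~\ref{59.1} together with the equivalence of the $H$- and $H_0$-seminorms—is exactly the paper's strategy, and your treatment of the base case and of the term $\|f(H)H(H_L^{\ell-1}-H^{\ell-1})H^k\|$ is correct.

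The one substantive difference is the direction of the telescoping. The paper writes
\[
H_L^{\ell+1}-H^{\ell+1}=H_L\bigl(H_L^{\ell}-H^{\ell}\bigr)+(H_L-H)H^{\ell},
\]
placing the single factor $H_L$ on the \emph{left}. After passing to $H_0$-seminorms one uses $H_L=Q_L^0HQ_L^0$ and the fact that $Q_L^0$ commutes with $f_1(H_0)$ (and has norm~$1$) to absorb $H_L$: one gets a bound by $C'\|f_1(H_0)(H_L^{\ell}-H^{\ell})H^{k_1}\|$ with $C',f_1,k_1$ independent of $L$, and the induction closes in one step. Your telescoping puts $H_L^{\ell-1}$ on the right instead, which forces the further split and produces the cross term
\[
\bigl\|f(H)(H_L-H)(H_L^{\ell-1}-H^{\ell-1})H^k\bigr\|.
\]
Here your claim that this is ``dominated by a bounded factor times $\|g(H_0)(H_L^{\ell-1}-H^{\ell-1})H_0^{k'}\|$'' is not free: the operator $f(H)(H_L-H)$ is indeed uniformly bounded in norm, but pulling it out as a bounded factor destroys the rapid decay on the left that the induction hypothesis (which needs a function in~${\mc F}$ there) requires; and since multiplication in $\LD[\tau_*^{\D}]$ is only separately continuous, the fact that both factors $\to 0$ in $\tau_*^{\D}$ does not by itself give the conclusion. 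Your auxiliary uniform-boundedness claim can be made to rescue this, but it has to be stated with a polynomial weight $H_0^{-m}$ on the left (not $f\in{\mc F}$), i.e.\ one really needs $\sup_L\|H_0^{-m}H_L^{r}H_0^{k}\|<\infty$ for suitable $m$, so that one can insert $H_0^{m}H_0^{-m}$ between the two factors and use Lemma~\ref{59.1} on the left piece. That works, but it is more than pure bookkeeping; the paper's choice of telescoping sidesteps the issue entirely.
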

\begin{proof}We proceed by induction on $\ell$.\\
For $\ell=1$ the statement follows immediately by the equivalence of the topologies and from Lemma \ref{2.2}.\\
Now,
$$\|f(H)(H_L^{\ell+1} - H^{\ell+1})H^k\|\leq \|f(H)H_L(H_L^\ell - H^\ell)H^k\|+ \|f(H)(H_L-H)H^{\ell+k}\|$$
and the second term on the r.h.s. goes to $0$ because we have just proved the induction for $\ell=1$.\\
The first term of the rhs can easily be estimated, making once
more use of the equivalence of the topologies, by a term of the
kind
$$ C' \|f_1(H_0)(H_L^\ell - H^\ell)H^{k_1}\|$$
which goes to zero again because of the induction.
\end{proof}
\begin{prop}If there exists $m \in {\mb N}$ such that $[H_L,H]_{m+1}=0$ then
$$ \int_0^t \|f(H_0) {\rm e}^{iH_L(t-t')}H_0^s\|dt' < \infty, \quad t \in {\mb R}^+$$
for each $f \in {\mc F}$ and for each $s \in {\mb N}\cup \{0\}$
\label{62.2}
\end{prop}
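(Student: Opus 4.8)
The plan is to use the hypothesis $[H_L,H]_{m+1}=0$ to turn the Heisenberg conjugate $\tau\mapsto {\rm e}^{iH_L\tau}H{\rm e}^{-iH_L\tau}$ into a \emph{polynomial} in $\tau$ with coefficients in $\LD$, and then to bound the integrand by a polynomial in $t$, which is integrable over $[0,t]$ for every $t$. First I would record that $H_L=H_0Q^0_L+Q^0_LBQ^0_L$ is a bounded symmetric element of $\LD$: since $Q^0_L$ commutes with $H_0$ and $Q^0_L\Hil\subset\D$, the relative bound \eqref{kato} gives $\|BQ^0_L\phi\|\leq(aL+b)\|\phi\|$, so $Q^0_LBQ^0_L$ is bounded, while $H_0Q^0_L$ has norm $\leq L$. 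Hence the iterated commutators $C_j:=[H_L,H]_j$ (with $C_0:=H$, $C_{j+1}:=[H_L,C_j]$) all lie in $\LD$, and $C_j=0$ for $j\geq m+1$. A downward induction on $k$ — starting from ${\rm e}^{iH_L\tau}C_m{\rm e}^{-iH_L\tau}=C_m$ (because $[H_L,C_m]=C_{m+1}=0$) and using, on $\D$, $\frac{d}{d\tau}\big({\rm e}^{iH_L\tau}C_k{\rm e}^{-iH_L\tau}\big)=i\,{\rm e}^{iH_L\tau}C_{k+1}{\rm e}^{-iH_L\tau}$ (legitimate since $H_L$ is bounded and the $C_j$ leave $\D$ invariant) — then yields
$$ {\rm e}^{iH_L\tau}H{\rm e}^{-iH_L\tau}=\sum_{j=0}^m\frac{(i\tau)^j}{j!}\,C_j, \qquad \tau\in{\mb R},$$
so that, raising to the $s$-th power (conjugation is multiplicative on $\LD$ and ${\rm e}^{\pm iH_L\tau}\D\subseteq\D$),
$$ {\rm e}^{iH_L\tau}H^s{\rm e}^{-iH_L\tau}=\Big(\sum_{j=0}^m\tfrac{(i\tau)^j}{j!}\,C_j\Big)^{\!s}$$
is a polynomial in $\tau$ of degree $\leq ms$ whose coefficients are finite linear combinations of products $C_{j_1}\cdots C_{j_s}\in\LD$.

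Next, for $\tau\geq0$ I would write $f(H){\rm e}^{iH_L\tau}H^s=f(H)\big({\rm e}^{iH_L\tau}H^s{\rm e}^{-iH_L\tau}\big){\rm e}^{iH_L\tau}$; since $\|{\rm e}^{iH_L\tau}\|=1$, expanding the polynomial and applying the triangle inequality gives
$$ \|f(H){\rm e}^{iH_L\tau}H^s\|\ \leq\ \sum\frac{|\tau|^{\,j_1+\dots+j_s}}{j_1!\cdots j_s!}\,\|f(H)\,C_{j_1}\cdots C_{j_s}\|\ =:\ P(|\tau|),$$
a polynomial in $|\tau|$ with non-negative coefficients. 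Each coefficient is finite because $C_{j_1}\cdots C_{j_s}\in\LD$ and $X\mapsto\|f(H)X\|$ is the $k=0$ instance of the quasi-uniform seminorm \eqref{QUN} written through $H$ (admissible by Lemma \ref{topol}), hence finite on all of $\LD$. Taking adjoints, the same reasoning bounds $\|H^s{\rm e}^{iH_L\tau}f(H)\|$ by a polynomial in $|\tau|$ as well.

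Finally, by Lemma \ref{topol} the seminorm $Y\mapsto\|f(H_0)YH_0^s\|$ is dominated, for suitable $C>0$, $f_1\in{\mc F}$, $s_1\in{\mb N}$, by $C\max\{\|f_1(H)YH^{s_1}\|,\|H^{s_1}Yf_1(H)\|\}$. Taking $Y={\rm e}^{iH_L(t-t')}$ and using the previous step, $\|f(H_0){\rm e}^{iH_L(t-t')}H_0^s\|\leq C\,\widetilde P(|t-t'|)\leq C\,\widetilde P(t)$ for all $t'\in[0,t]$, where $\widetilde P$ is a polynomial with non-negative coefficients; hence $\int_0^t\|f(H_0){\rm e}^{iH_L(t-t')}H_0^s\|\,dt'\leq t\,C\,\widetilde P(t)<\infty$, which is the assertion.

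I do not expect a serious obstacle: the whole content is the observation that nilpotency of ${\rm ad}_{H_L}$ on $H$ makes the time evolution of $H$ polynomial in $\tau$. The only delicate points are the justification of the term-by-term differentiation and multiplication performed on $\D$ (which works because $H_L$ is bounded and $H$ and the $C_j$ map $\D$ into itself) and the passage back and forth between the $H_0$- and $H$-seminorms, which is precisely what Lemma \ref{topol} supplies.
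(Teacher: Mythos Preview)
Your proof is correct and follows the same strategy as the paper's: use the nilpotency hypothesis to write ${\rm e}^{iH_L\tau}H{\rm e}^{-iH_L\tau}=\sum_{j=0}^m\frac{(i\tau)^j}{j!}[H_L,H]_j$, raise this to the relevant power, and pass between the $H_0$- and $H$-seminorms via Lemma~\ref{topol}. The only cosmetic difference is that the paper also invokes Lemma~\ref{61.1} and records the final bound as a single constant $C\|f_1(H)H^{s_1}\|$, whereas you keep the polynomial dependence in $t$ explicit; for the stated finiteness of the integral the two formulations are equivalent, and your version is in fact the more transparent one.
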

\begin{proof} By the assumption, we have:
\begin{equation} {\rm e}^{iH_L(t-t')} H {\rm e}^{-iH_L(t-t')}= H + i(t-t')[H_L,H] +\cdots \frac{i^m}{m!}[H_L,H]_m.
\label{series}
\end{equation}
Now, using the equivalence between the topologies produced by
$H_0$ and $H$, it is easy to see that:
$$\|f(H_0) {\rm e}^{iH_L(t-t')}H_0^s\|\leq C\|f_1(H)\left( {\rm e}^{iH_L(t-t')} H {\rm e}^{-iH_L(t-t')}\right)^{s_1}\|.$$
Inserting \eqref{series} on the r.h.s. and making use of Lemma
\ref{61.1}, we finally get the estimate:
$$ \|f(H_0) {\rm e}^{iH_L(t-t')}H_0^s\| \leq C\|f_1(H)H^{s_1}\|,$$
and this easily imply the statement.
\end{proof}

Clearly, even if the conditions given above for the
$\tau_*^{\D}$-convergence of ${\rm e}^{iH_Lt}$ to ${\rm e}^{iHt}$
occur, the convergence of $\alpha_L^t (A)$ to $\alpha^t (A)$ is
not guaranteed. For this reason we conclude this Section by
outlining a different possible approach.

Assume that, for each $L \in {\mb R}$ there exists a one-parameter
family  $\beta_L^t (A)$ of linear maps of $\LD$ (not necessarily
an automorphisms group) such that, for each $f \in {\mc F}$ and $k
\in {\mb N}$,
\begin{equation} \|f(H)\left(\beta_L^t(A)-\alpha_L^t(A)\right)H^k\|\to 0 \quad \mbox{as }L \to \infty
\label{qd}
\end{equation}
Clearly the convergence of $\beta_L^t (A)$ to $\alpha^t(A)$ would directly lead to the solution of the dynamical problem. We want to stress that $\beta_L^t$ could be rather {\em unusual} and, therefore, it should be only considered as a technical tool.\\
In general, however, the possibility of finding a good definition
for the $\beta^t_L$'s that allows \eqref{qd} to hold, is quite
difficult and the lesson of the previous discussion on the
convergence of ${\rm e}^{iH_Lt}$ is that strong assumptions must
be imposed in order to get it.\\ A weaker condition on the
$\beta^t_L$'s, whose content of information is, nevertheless,
non-empty, would consists in requiring, instead of \eqref{qd},
that
\begin{itemize}
\item[(a)] $\beta_L^t(A)$ converges to $\alpha^t(A)$, for each $A \in \LD$ together with all time derivatives. This means that an Heisenberg dynamics $e^{iHt}(\cdot )e^{-iHt}$ can be recovered.
\item[(b)] As for the Schr\"odinger dynamics, that is for the dynamics in the space of vectors, we ask $\beta_L^t$ of being in same way (to be specified further) {\em generated} by a family of bounded operators which, when applied to any $\Psi\in\D$, is $t_H$-convergent together with all time derivatives.
\end{itemize}
This happens, for instance, in the case where $H$ exists, if we
define
$$ \beta_L^t (A)= V_L^t A V_L^{-t} .$$
where $ V_L^t:= Q^0_L {\rm e}^{iHt} Q^0_L $ and  the $Q^0_L$'s are the projection of $H_0$. \eqref{proj}.\\
Under these assumptions, $V_L$ is a well-defined bounded operator
of $\LD$, and the following Proposition holds:
\begin{prop}In the above conditions, the following statements hold:
\begin{itemize}
\item[(i)] $t_H-\lim_{L\to \infty} V_L^t \psi = \psi(t):={\rm e}^{iHt} \psi, \quad \forall \psi \in \D$
\item[(ii)]$\tau_*^{\D}-\lim _{L\to \infty} V_L^t = {\rm e}^{iHt}$
\item[(iii)]$\tau_*^{\D}-\lim _{L\to \infty} \beta_L^t (A)= \alpha^t (A):={\rm e}^{iHt}A{\rm e}^{-iHt}, \quad \forall A \in \LD$
\end{itemize}
and, more generally:
\begin{itemize}
\item[(i')] $t_H-\lim_{L\to \infty}\frac{d^n}{dt^n} V_L^t \psi =\frac{d^n}{dt^n} \psi(t), \quad \forall \psi \in \D, \; n \in {\mb N}\cup\{0\}$
\item[(iii')] $\tau_*^{\D}-\lim _{L\to \infty} \frac{d^n}{dt^n}\beta_L^t (A)= \frac{d^n}{dt^n}\alpha^t (A), \quad \forall A \in \LD, \; n \in {\mb N}\cup\{0\}$
\end{itemize}
\label{49.1}
\end{prop}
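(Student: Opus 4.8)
The plan is to prove the statements in the order (i) $\to$ (ii) $\to$ (iii), and then (i$'$) $\to$ (iii$'$), each time reducing the derivative versions to the undifferentiated ones. The underlying idea is that $V_L^t = Q^0_L \mathrm{e}^{iHt} Q^0_L$ differs from $\mathrm{e}^{iHt}$ by error terms carrying a factor $(I - Q^0_L)$, and such factors are killed by the inverse powers of $H_0$ that appear in the quasi-uniform seminorms, exactly as in the proof of Lemma \ref{2.2}. Throughout I will use the equivalence of the $H_0$- and $H$-seminorms from Lemma \ref{topol}, so that $\tau_*^\D$ may be computed with whichever of the two operators is more convenient.

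For (i): write $V_L^t \psi - \psi(t) = (Q^0_L - I)\mathrm{e}^{iHt} Q^0_L \psi + \mathrm{e}^{iHt}(Q^0_L - I)\psi$. To estimate a $t_H$-seminorm $\|H^k(\cdot)\|$ of this difference, I use that $\mathrm{e}^{iHt}$ commutes with $H^k$ and that, by hypothesis (b), $Q^0_L \mathrm{e}^{iHt} Q^0_L \psi$ is built from bounded operators in a way that keeps everything in $\D$; the factor $(Q^0_L - I)$ is then converted into $(I - Q^0_L)H_0^{-\ell} \cdot H_0^{\ell}$, and $\|(I-Q^0_L)H_0^{-\ell}\| \to 0$ as in Lemma \ref{2.2} while the remaining factor is a fixed finite quantity depending on $\psi$ through a higher $t_{H_0}$-seminorm (finite since $\psi \in \D = \D^\infty(H_0)$). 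This gives (i). For (ii), the same decomposition is estimated in the operator norm of the seminorm \eqref{QUN}: $\|f(H)(V_L^t - \mathrm{e}^{iHt})H^k\|$ splits into two terms, each of which, after inserting $I = H_0^{\ell}H_0^{-\ell}$ next to the $(I-Q^0_L)$ factor and passing to $H_0$-seminorms, is bounded by $\|H_0^{-\ell}(I-Q^0_L)\|$ times a fixed finite seminorm of $\mathrm{e}^{iHt}$ of the type $\|f_1(H_0)\mathrm{e}^{iHt}H_0^m\|$; this tends to $0$. Statement (iii) then follows from (ii) by the separate continuity of multiplication in $\LD[\tau_*^\D]$, writing $\beta_L^t(A) - \alpha^t(A) = (V_L^t - \mathrm{e}^{iHt})AV_L^{-t} + \mathrm{e}^{iHt}A(V_L^{-t} - \mathrm{e}^{-iHt})$ and using that the nets $V_L^{\pm t}$ are $\tau_*^\D$-convergent, hence $\tau_*^\D$-bounded.

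For the derivative statements, I would differentiate $V_L^t = Q^0_L \mathrm{e}^{iHt}Q^0_L$ under the spectral integral to get $\frac{d^n}{dt^n}V_L^t = i^n Q^0_L H^n \mathrm{e}^{iHt}Q^0_L$ (legitimate because $Q^0_L$ is a bounded projection and $H^n Q^0_L$ is bounded, since $Q^0_L$ projects onto a domain on which $H$ — equivalently $H_0$ — is bounded, by (i) of Lemma \ref{topol}). Then $\frac{d^n}{dt^n}V_L^t \psi - \frac{d^n}{dt^n}\psi(t)$ has exactly the same structure as in case (i) but with an extra fixed factor $H^n$, which is absorbed by passing to a higher seminorm; this proves (i$'$). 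Statement (iii$'$) follows from (i$'$) by the Leibniz rule applied to $\beta_L^t(A) = V_L^t A V_L^{-t}$, expanding $\frac{d^n}{dt^n}\beta_L^t(A)$ as a finite sum of terms $\binom{n}{j,\,n-j}\big(\frac{d^j}{dt^j}V_L^t\big)A\big(\frac{d^{n-j}}{dt^{n-j}}V_L^{-t}\big)$ and using, as in (iii), separate continuity of multiplication together with the $\tau_*^\D$-convergence (hence boundedness) of each differentiated net, which is the operator-form analogue of (i$'$) established by the same computation as (ii).

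The main obstacle I anticipate is not any single estimate — those are all variants of the Lemma \ref{2.2} trick — but rather the bookkeeping needed to make precise the vaguely stated hypothesis (b) (``$\beta_L^t$ is generated by a family of bounded operators, $t_H$-convergent with all time derivatives'') and to justify interchanging differentiation with the spectral integral and with the action on $\D$. In particular one must check that $H^n Q^0_L$ really is bounded and that the relevant higher seminorms of $\mathrm{e}^{iHt}$ (such as $\|f(H_0)\mathrm{e}^{iHt}H_0^m\|$ and $\|H^m\mathrm{e}^{iHt}\psi\|$) are genuinely finite and locally bounded in $t$ — this uses that $\mathrm{e}^{iHt}$ commutes with $H$ and leaves $\D^\infty(H) = \D^\infty(H_0)$ invariant, so these reduce to ordinary $t_{H_0}$-seminorms of a fixed vector, which are finite by definition of $\D$. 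Once these routine but slightly delicate points are pinned down, the rest is a direct translation of the cut-off argument already used for Lemma \ref{2.2}.
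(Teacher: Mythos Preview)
Your proposal is correct and follows exactly the route the paper indicates: the paper's own ``proof'' is the single sentence that the result follows from the equivalence of the topologies generated by $H$ and $H_0$ (Lemma \ref{topol}), and you have faithfully unpacked this by combining that equivalence with the $(I-Q^0_L)H_0^{-\ell}\to 0$ estimate of Lemma \ref{2.2}. One small point worth tightening: for (iii) you invoke ``separate continuity of multiplication'' together with $\tau_*^\D$-boundedness of $V_L^{\pm t}$, which tacitly assumes hypocontinuity; you can avoid this by writing $\beta_L^t(A)-\alpha^t(A) = Q^0_L e^{iHt}\big(Q^0_L A Q^0_L - A\big)e^{-iHt}Q^0_L + \big(Q^0_L\alpha^t(A)Q^0_L - \alpha^t(A)\big)$, commuting the outer $Q^0_L$'s past $f_1(H_0)$ and $H_0^{k_1}$ (where $\|Q^0_L\|=1$), and then applying Lemma \ref{2.2} and separate continuity with the \emph{fixed} factors $e^{\pm iHt}$.
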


The proof of this Proposition follows from the equivalence between the topologies generated by $H$ and $H_0$, proved in Lemma 2.1.\\
This approach, which is only one of the possible strategies when
$H$ exists, could be of a certain interest for situations when the
dynamics can only be obtained via a net of operators
$H_L=H_0+B_L$, $H_0$ being the free Hamiltonian and $B_L$ being a
regularized perturbation.  In this case the approach to the
thermodynamical limit could involve the family of bounded
operators $ V_{L,M}^t:= Q^0_L {\rm e}^{iH_Mt} Q^0_L $, and one can
try to extend the above results. A further analysis on this
subject is currently {\em work in progress}.

 \section{Outcome and possible developments}

In this paper we have analyzed a possible approach to define an
algebraic dynamics when a free hamiltonian $H_0$ is perturbed by
an operator $B$ which essentially leaves the domain of all the
powers of $H_0$ invariant.

What is still missing, how we discussed in the Introduction, is
the analysis of the situation where the definition of the dynamics
is not straightforward since it should follow from a net of
operators $\{H_L\}$ whose limit does not exist in any physical
topology \cite{thi, bagarello}. In this case a possible approach
can be made in terms of derivations, for instance, in the way
explained below.

Let us suppose that to a free spatial derivation
$\delta_0(.)=i[H_0,.]$ a perturbation term $\delta_P$ is added, so
that
$$
\delta(A)= \delta_0(A)+\delta_P(A),\quad A\in \LD.
$$
In this case we define $\eta_L(A)=Q^0_L\delta(A)Q^0_L$, with $A\in
\LD$ and $Q^0_L$ as in the previous Sections. It is easy to see
that $\eta_L$ is not in general a derivation because the Leibniz
rule may fail. Let $\Delta_L$ be a map on $\LD$ which has the
property that $\delta_L=\eta_L+\Delta_L$ satisfies the Leibniz
rule together with the other properties of a derivation. Of
course, this map is not unique since, for instance, we can always
add a commutator $i[H',.]$ to $\Delta_L$, with any self-adjoint
operator $H'$, without affecting the properties of a derivation
(we should only care about domain problems in choosing $H'$!).
>From a physical point of view it is reasonable to expect that
$\Delta_L$ can be chosen in such a way that
$\|\Delta_L(A)\|^{f,k}\rightarrow 0$ with $L$ since we would like
to recover the original derivation $\delta$ after removing of the
cutoff and we know from Corollary \ref{deriv} that
$\|(\eta_L(A)-\delta(A))\|^{f,k}\rightarrow 0$. If also $\delta_P$
is spatial, then is not difficult to give an explicit expression
for $\Delta_L$ and to check that the requirements above are
satisfied. In this case in fact
$$
 \Delta_L(A)=\{Q^0_LH_0,[Q^0_L,A]\}+Q^0_LB[Q^0_L,A]+[Q^0_L,A]BQ^0_L,
$$
where $\{X,Y\}=XY+YX$.

Once we have introduced $\delta_L$ the next step is to find
conditions for this map to be spatial. The related operator $H_L$,
which we expect to be of the form $Q^0_L(H_0+B)Q^0_L$ for a
suitable self-adjoint operator $B$, can be used to perform the
same analysis as that discussed in the previous Section.

Of course this is by no means the only possibility of approaching
this problem, but is the one which is closer to our previous
analysis, and in this perspective, is particularly relevant for
us. We hope to discuss this problem in full details in a future
paper.

\vspace{6mm}

\noindent{\large \bf Acknowledgments} \vspace{5mm}

    F.B. acknowledges financial support by the Murst, within the  project {\em Problemi
Matematici Non Lineari di Propagazione e Stabilit\`a nei Modelli
del Continuo}, coordinated by Prof. T. Ruggeri.

\vspace{8mm}

\end{document}